\documentclass[superscriptaddress,amsmath,amssymb,11pt,]{revtex4}

\usepackage{amsthm}
\usepackage{charter}
\usepackage{graphicx}
\usepackage{bm}
\usepackage{dsfont}
\usepackage[landscape,papersize={297.1mm,210mm},left=1.9cm,right=1.6cm,top=2.7cm,bottom=2.8cm]{geometry}
\usepackage[pdfstartview=FitH, colorlinks, pdfborder=001, linkcolor=red, anchorcolor=blue, citecolor=blue, urlcolor=blue]{hyperref}

\parskip=3pt
\newtheorem{theorem}{Theorem}

\newtheorem{lemma}{Lemma}

\begin{document}
\title{Monogamy and polygamy for multi-qudit generalized $W$-class
states based on concurrence of assistance and Tsallis-$q$ entanglement of assistance}

\author{Wen Zhou}
\email{2230501027@cnu.edu.cn}
\affiliation{School of Mathematical Sciences, Capital Normal University, Beijing 100048, China}
\author{Zhong-Xi Shen}
\email{18738951378@163.com}
\affiliation{School of Mathematical Sciences, Capital Normal University, Beijing 100048, China}
\author{Dong-Ping Xuan}
\email{2230501014@cnu.edu.cn}
\affiliation{School of Mathematical Sciences, Capital Normal University, Beijing 100048, China}
\author{Zhi-Xi Wang}
\email{wangzhx@cnu.edu.cn}
\affiliation{School of Mathematical Sciences, Capital Normal University, Beijing 100048, China}
\author{Shao-Ming Fei}
\email{feishm@cnu.edu.cn}
\affiliation{School of Mathematical Sciences, Capital Normal University, Beijing 100048, China}

\begin{abstract}
By analyzing the reduced density matrices derived from a generalized $W$-class state under any partition, we present new analytical monogamy inequalities satisfied by the $\alpha$-th ($\alpha\geq\gamma,~\gamma\geq2$) power and $\beta$-th ($0\leq\beta\leq\frac{\gamma}{2},~\gamma\geq2$) power of the concurrence of assistance for multi-qudit generalized $W$-class states, which are demonstrated to be tighter than previous studies through detailed examples. Furthermore, using the Tsallis-$q$ entanglement of assistance, we also establish new monogamy and polygamy relations, which are shown to be valid even for multipartite higher-dimensional states that the CKW inequality is violated.

\medskip
\noindent Keywords: Monogamy and polygamy, Multi-qudit system, Generalized $W$-class state, Tsallis-$q$ entanglement of assistance
\end{abstract}

\maketitle

\section{Introduction}
Quantum entanglement~\cite{MAN,RPMK,KSS,HPB,JIV,CYSG} is a fundamental feature of quantum mechanics. It plays a pivotal role in distinguishing the quantum from the classical world.
The principle of entanglement monogamy~\cite{ckw61052306} says that a quantum system, once entangled with one subsystem, will constrain its level of entanglement with the remaining subsystems. The monogamy relations not only dictate the distribution of entanglement within multipartite settings, but also applies to various realms of quantum physics, such as in characterizing the construction of a quantum system~\cite{ZF,RML,ROA} and in proving the nocloning theorem~\cite{M052338}. It also acts as an important gradient in enhancing the security of quantum cryptography~\cite{TFK}.

Mathematically, the monogamy of entanglement is characterized through an entanglement measure $\mathcal{E}$, applied to a tripartite system $\rho_{ABC}$, as follows
\begin{equation}\label{CKW}
\mathcal{E}(\rho_{A|BC})\geq \mathcal{E}(\rho_{AB})+\mathcal{E}(\rho_{AC}),
\end{equation}
which is commonly recognized  as the $\mathrm{CKW}$ inequality \cite{ckw61052306}, where $\rho_{AB}={\rm Tr}_C(\rho_{ABC})$ and $\rho_{AC}={\rm Tr}_B(\rho_{ABC})$ are the reduced density matrices by tracing over the subsystem $C$ and $B$, respectively, with ${\rm Tr}_B$ (${\rm Tr}_C$) denoting the partial trace with respect to the subsystem $B$ ($C$). In Ref.~\cite{OV}, the authors established the longstanding conjecture by Coffman et al., demonstrating that the distribution of bipartite quantum entanglement, quantified by the tangle $\tau$~\cite{ckw61052306}, across $N$ qubits adheres to a tight inequality: $\tau{\rho_{A_1A_2}}+\tau{\rho_{A_1A_3}}+\cdots+\tau{\rho_{A_1A_N}}\leq \tau{\rho_{A_1|A_2\cdots A_N}}$. Here, $\tau{\rho_{A_1|A_2\cdots A_N}}$ represents the bipartite quantum entanglement measured by the tangle when the system is bipartitioned between $A_1$ and the combined subsystem $A_2 A_3\cdots A_N$.
In Ref.~\cite{kw69022309}, the authors introduced a simple identity that effectively delineates the trade-off between entanglement and classical correlation, thereby providing a solid foundation for the derivation of stringent monogamy relations. Furthermore, the authors also established diverse monogamous trade-off relations involving other entanglement and correlation measures. In Ref.~\cite{Xuan131}, the authors studied the quantification of parameterized monogamy relation with equation. The author in Ref.~\cite{GY012405} introduced the concept of complete monogamy and tightly complete monogamy for the $k$-entanglement measures.

Another significant notion is the concept of entanglement of assistance, which has been proved to possess polygamy relation \cite{GBS}. The polygamy relation can be regarded as an additional constraint on entanglement within multipartite quantum systems. Conversely, polygamy relations describe scenarios where entanglement can be shared more freely across multipartite subsystems, providing a complementary perspective to monogamy relations. The duality of entanglement sharability versus entanglement of assistance is evident in the sense that the upper bound for the former is the lower bound for the latter. The polygamy relation for the tripartite state  $\rho_{ABC}$ is defined by:
\begin{equation}
\mathcal{E}_{a}(\rho_{A|BC})\leq\mathcal{E}_{a} (\rho_{AB})+\mathcal{E}_{a}(\rho_{AC}),
\end{equation}
where $\mathcal{E}_{a}$ represents the entanglement measure of assistance that corresponds to $\mathcal{E}$. Furthermore, Guo et al. demonstrates that any entanglement of assistance is polygamous rather than monogamous \cite{GY}. Subsequently, the monogamy and polygamy inequalities have been generalized to $N$-qubit systems~\cite{KIM062328,KIM032335,BXZ,KIM012329}, and this generalization includes diverse bipartite entanglement measures and assisted entanglement measures~\cite{ES,KIM062328,KIM032335}. The authors in Ref.~\cite{LYY385} provide generalized definitions of polygamy in $N$-qubit systems with $N\geq4$, and present polygamy inequalities with a polygamy power.

Recently, Kim et al. demonstrated that the generalised $W$-class (GW) states satisfy the monogamy relations with regard to concurrence~\cite{KS495301}. In Ref.~\cite{ZF}, the authors introduces the general monogamy inequalities for the $\alpha$-th power of concurrence and entanglement of formation for $N$-qubit states. Nonetheless, the monogamy relation (\ref{CKW}) does not consistently apply to $N$-qudit systems. Ou revealed in 2007 that $3\otimes3\otimes3$ systems have counterexamples when considering concurrence as the measure~\cite{O034305}. Furthermore, the authors proved that the monogamy relation does not hold for higher-dimensional systems by using a class of entanglement measures with additivity ~\cite{LMHPAW}. Additionally, a particular entanglement measure known as squashed entanglement has been identified, and it is monogamous for systems of arbitrary dimension \cite{CW}.
%One question that naturally emerges is whether such monogamy relations hold also for higher-dimensional quantum states.

%One way to cope with the problem is to seek and study a class of states in higher dimensional systems satisfying the monogamy relations.

Generally, the concurrence of assistance (CoA) fails to satisfy monogamy relations for general quantum states. As a result, researchers have investigated special classes of quantum states that do satisfy the monogamy relations as defined by CoA. General monogamy inequalities have been formulated for CoA and entanglement of formation in the $N$-qubit GW states~\cite{ZXN}. Subsequently, more tighter monogamous relations had been presented for CoA~\cite{jzx1,jzx2,XHLF}. In Ref.~\cite{SX032344}, the authors presented new class of monogamy inequalities pertaining to the CoA for the reduced density matrices of an $N$-qudit GW states under any partition. Then, in Refs.~\cite{LZZ,XHLF}, the authors further improved the monogamy inequalities derived in Ref.~\cite{SX032344}.

The following sections detail the organization of this article. In Sec~\ref{s1}, we review some related basic concepts. In Sec~\ref{s2} and Sec~\ref{s3}, we present new analytical monogamy inequalities for the $\alpha$-th ($\alpha\geq\gamma,~\gamma\geq2$) and $\beta$-th ($0\leq\beta\leq\frac{\gamma}{2},~\gamma\geq2$) powers of CoA for multi-qudit GW states. These inequalities are tighter than previous corresponding ones, as evidenced by our detailed illustrative examples. In Sec~\ref{s4}, utilizing Tsallis-$q$ entanglement of assistance for the parameter $q$, we also establish new monogamy and polygamy relations for  multi-qudit GW states with $\frac{5-\sqrt{13}}{2}\leq q\leq 2$ and $3\leq q\leq\frac{5+\sqrt{13}}{2}$. Moreover, the recently introduced polygamy inequality has proven to be more capable in addressing the counterexamples raised by the CKW monogamy inequality within higher-dimensional systems. At last, the conclusion and discussion are shown in Sec~\ref{s5}.

\section{Preliminaries}\label{s1}
We initially present the definitions for the pertinent entanglement measures and GW states.
For a bipartite pure state $|\psi\rangle_{AB}$ within the finite-dimensional Hilbert space $\mathcal{H}_A\otimes \mathcal{H}_B$, the concurrence is expressed as~\cite{U032307,RBCHM}
\begin{eqnarray}\label{C1}
\mathcal{C}(|\psi\rangle_{AB})=\sqrt{{2\left[1-\mathrm{Tr}(\rho_A^2)\right]}},
\end{eqnarray}
where $\rho_A=\mathrm{Tr}_B(|\psi\rangle_{AB}\langle\psi|)$ denotes the reduced density matrix. The definition of concurrence for a bipartite mixed state $\rho_{AB}$ is given through its convex roof extension
\begin{eqnarray*}
\mathcal{C}(\rho_{AB})=\min_{\{p_i,|\psi_i\rangle\}}\sum_ip_i\mathcal{C}(|\psi_i\rangle),
\end{eqnarray*}
where the minimum is taken over all possible pure state decompositions  of $\rho_{AB}$ into pure states, represented as $\rho_{AB}=\sum_ip_i|\psi_i\rangle\langle\psi_i|$, with $p_i\geq0$ and $\sum_ip_i=1$. For a 2-qubit mixed state $\rho$, the concurrence of $\rho$ is expressed as \cite{W2245}
\begin{eqnarray*}
\mathcal{C}(\rho)=\max\{\zeta_1-\zeta_2-\zeta_3-\zeta_4,0\},
\end{eqnarray*}
where $\zeta_1\geq\zeta_2\geq\zeta_3\geq\zeta_4$ represents the eigenvalues of the matrix $\sqrt{\sqrt{\rho}\widetilde{\rho}\sqrt{\rho}}$. The expression $\widetilde{\rho}$ is defined as $(\varrho_y\otimes\varrho_y) \rho^\ast (\varrho_y\otimes\varrho_y)$, with $\rho^\ast$ denoting the complex conjugation of $\rho$, and $\varrho_y$ denoting the standard Pauli matrix.

%For a general higher-dimension pure bipartite state $|\Psi\rangle$ in Hilbert space $H_A\otimes H_B$, concurrence is defined as \cite{FMA},
%\begin{eqnarray*}
%C(\Psi)=\sqrt{2[\langle\Psi|\Psi\rangle^{2}-\mathrm{Tr}\rho^{2}_{i}]},
%\end{eqnarray*}
%where $\rho_{i}$ is the reduced density operator obtained by tracing over either subsystems A or B. It is clear that
%$C(\Psi)=0$ if and only if $|\Psi\rangle$ is a product state, i.e. $|\Psi\rangle=|\Psi_{A}\rangle\otimes|\Psi_{B}\rangle$.

%Interestingly, $C(\Psi)$ can be observed through a small number of projective measurement on a twofold copy $|\Psi\rangle\otimes |\Psi\rangle$ of $|\Psi\rangle$ \cite{FMA,FMA1,LF,F,FA}:
%\begin{eqnarray*}
%C(\Psi)=\sqrt{\langle\Psi|\otimes\langle\Psi|~\mathcal{A}~|\Psi\rangle\otimes |\Psi\rangle},~~\mathcal{A}=4P_{-}^{A}\otimes P_{-}^{B},
%\end{eqnarray*}
%where $P_{-}^{A}~(P_{-}^{B})$ is the projector onto the antisymmetric subspace of $H_A\otimes H_A$ ($H_B\otimes H_B$).

The CoA for a tripartite pure state $|\psi\rangle_{ABC}$ is defined as follows~\cite{TFS,YCS},
\begin{eqnarray*}
\mathcal{C}_a(|\psi\rangle_{ABC})\equiv \mathcal{C}_a(\rho_{AB})=\max_{\{p_i,|\psi_i\rangle\}}\sum_ip_i\mathcal{C}(|\psi_i\rangle),
\end{eqnarray*}
where the maximum is obtained by considering all possible decompositions of $\rho_{AB}.$ For a pure state $\rho_{AB}$, it holds that $\mathcal{C}(|\psi\rangle_{AB})=\mathcal{C}_a(\rho_{AB})$.

The $N$-qubit $W$-class states and $N$-qudit GW
states are defined by \cite{KS495301}
\begin{eqnarray}\label{w}
|\mathcal{W}\rangle_{A_{1}A_{2}\cdots A_{N}}=b_{1}|10\cdots0\rangle+b_{2}|01\cdots0\rangle+\cdots+b_{N}|00\cdots1\rangle,
\end{eqnarray}
and
\begin{eqnarray}\label{gw}
|\mathcal{W}^{d}_{N}\rangle_{A_{1}A_{2}\cdots A_{N}}=\sum_{s=1}^{d-1}(b_{1s}|s0\cdots0\rangle+b_{2s}|0s\cdots0\rangle+\cdots+b_{Ns}|00\cdots0s\rangle),
\end{eqnarray}
with the normalizations $\sum_{i=1}^{N}|b_i|^2=1$ and $\sum_{t=1}^{N}\sum_{s=1}^{d-1}|b_{ts}|^{2}=1$, respectively. The states (\ref{gw}) includes the $N$-qubit $W$-class states (\ref{w}) as a special case of $d = 2$.

The formation of a partially coherent superposition $\mathrm{(PCS)}$ between an $N$-qudit $\mathrm{GW}$ state $|\mathcal{W}^{d}_{N}\rangle$ and the $N$-qudit vacuum $|0\rangle^{\otimes N}$ yields a class of higher-dimensional quantum states that are parameterized by two variables \cite{KIM},
\begin{eqnarray*}
\rho^{(q,\lambda)}_{A_{1}A_{2}\cdots A_{N}}
&=&q|\mathcal{W}^{d}_{N}\rangle\langle \mathcal{W}^{d}_{N}|+(1-q)|0\rangle^{\otimes N}\langle0|^{\otimes N}\\
&&+\lambda\sqrt{q(1-q)}\big(|\mathcal{W}^{d}_{N}\rangle\langle0|^{\otimes N}+|0\rangle^{\otimes N}\langle \mathcal{W}^{d}_{N}| \big),
\end{eqnarray*}
where $0\leq q,~\lambda\leq 1.$ For $q = 1$, the $\mathrm{PCS}$ state $\rho^{(q,\lambda)}_{A_{1}A_{2}\cdots A_{N}}$ reduces to the GW state $|\mathcal{W}^{d}_{N}\rangle$. When the coherency $\lambda$ equals 1, the $\mathrm{PCS}$ state $\rho^{(q,\lambda)}_{A_{1}A_{2}\cdots A_{N}}$ reduces to the coherent superposition of the $\mathrm{GW}$ states and the vacuum $\mathrm{(GWV)}$ states,
\begin{eqnarray*}
|\phi\rangle_{A_{1}A_{2}\cdots A_{N}} =\sqrt{q}|\mathcal{W}^{d}_{N}\rangle+\sqrt{1-q}|0\rangle^{\otimes N}.
\end{eqnarray*}
If $\lambda = 0$, the state $\rho^{(q,\lambda)}_{A_{1}A_{2}\cdots A_{N}}$ transforms into a mixed incoherent superposition state,
\begin{eqnarray*}
\rho_{A_{1}A_{2}\cdots A_{N}}= q|\mathcal{W}^{d}_{N}\rangle\langle \mathcal{W}^{d}_{N}|+(1-q)|0\rangle^{\otimes N}\langle0|^{\otimes N}.
\end{eqnarray*}
The $N$-qudit $\mathrm{PCS}$ state's reduced $m$-partite density matrix, denoted as $\rho_{A_{1}A_{2}\cdots A_{m}}$, preserves the characteristics of a $\mathrm{PCS}$ within the confines of the reduced $m$ subsystems~\cite{KIM}.

\section{ Improved monogamy relations for concurrence of assistance}\label{s2}

We introduce a set of tighter monogamy relations expressed through the CoA in this section. Different from the CKW inequality (\ref{CKW}), which is applicable to concurrence, the CoA typically does not satisfy the monogamy relations. Nevertheless, for $N$-qubit $W$ state,
$|\psi\rangle_{AB_1\cdots B_{N-1}}\in \mathcal{H}_{A}\otimes \mathcal{H}_{B_{1}}\otimes\cdots\otimes \mathcal{H}_{B_{N-1}}$, one has \cite{ZXN},
\begin{eqnarray}\label{la2}
\mathcal{C}(\rho_{AB_i})=\mathcal{C}_a(\rho_{AB_i}),
\end{eqnarray}
where $\rho_{AB_i}=\mathrm{Tr}_{B_1\cdots B_{i-1}B_{i+1}\cdots B_{N-1}}(|\psi\rangle_{AB_1\cdots B_{N-1}}\langle\psi|), ~i=1,2,...,N-1$. When $\alpha\geq2$, the CoA $\mathcal{C}_a(|\psi\rangle_{A|B_1\cdots B_{N-1}})$ obeys
the monogamy inequality \cite{ZXN},
\begin{equation}\label{ZXN1}
  \mathcal{C}_a^\alpha(|\psi\rangle_{A|B_1B_2\cdots B_{N-1}})\geq \mathcal{C}_a^\alpha(\rho_{AB_1})+\mathcal{C}_a^\alpha(\rho_{AB_2})+\cdots+\mathcal{C}_a^\alpha(\rho_{AB_{N-1}}).
\end{equation}
The relation (\ref{ZXN1}) has been refined the following form
\begin{eqnarray}\label{jzx2}
\mathcal{C}_a^\alpha(\rho_{A|B_1B_2\cdots B_{N-1}})\geq
&&\mathcal{C}_a^\alpha(\rho_{AB_1})+\frac{\alpha}{2} \mathcal{C}_a^\alpha(\rho_{AB_2})+\cdots+\left(\frac{\alpha}{2}\right)^{z-1}\mathcal{C}_a^\alpha(\rho_{AB_z})\nonumber\\
&&+\left(\frac{\alpha}{2}\right)^{z+1}\left(\mathcal{C}_a^\alpha(\rho_{AB_{z+1}})+\cdots+\mathcal{C}_a^\alpha(\rho_{AB_{N-2}})\right)\nonumber\\
&&+\left(\frac{\alpha}{2}\right)^{z}\mathcal{C}_a^\alpha(\rho_{AB_{N-1}})
\end{eqnarray} in Ref.~\cite{jzx1} and then to be
\begin{eqnarray}\label{JZX1}
\mathcal{C}^{\alpha}_{a}(\rho_{A|B_1 \cdots B_{N-1}})&\geq& \mathcal{C}^{\alpha}_{a}(\rho_{AB_1})+(2^{\frac{\alpha}{2}}-1)\mathcal{C}^{\alpha}_{a}(\rho_{AB_2})+\cdots+(2^{\frac{\alpha}{2}}-1)^{z-1}\mathcal{C}^{\alpha}_{a}(\rho_{AB_z})\nonumber\\
& &+(2^{\frac{\alpha}{2}}-1)^{z+1}(\mathcal{C}^{\alpha}_{a}(\rho_{AB_{z+1}})+\cdots+\mathcal{C}^{\alpha}_{a}(\rho_{AB_{N-2}}))\nonumber\\
& &+(2^{\frac{\alpha}{2}}-1)^z\mathcal{C}^{\alpha}_{a}(\rho_{AB_{N-1}})
\end{eqnarray}
in Ref.~\cite{jzx2}, where $\mathcal{C}(\rho_{AB_i})\geq \mathcal{C}(\rho_{A|B_{i+1}\cdots B_{N-1}})$ for $i=1,2,\cdots,z$, and $\mathcal{C}(\rho_{AB_j})\leq \mathcal{C}(\rho_{A|B_{j+1}\cdots B_{N-1}})$ for $j=z+1,\cdots,N-2$, $1\leq z\leq N-3$ and $N\geq4$.

For any $n$-qudit GW states (\ref{gw}) along with a partition $\mathcal{P}=\{P_1,\ldots,P_N \}$
of the set of subsystems $\mathcal{S}=\{A_{1}, A_2,\ldots,A_{n} \}$, $N\leq n$, one has~\cite{KS495301}
\begin{equation}\label{XD1}
\mathcal{C}^{2}({\rho_{P_t|P_1\cdots\widehat{P}_t\cdots P_N}}) = \sum_{l \neq t}\mathcal{C}^{2}(\rho_{P_t P_l}) = \sum_{l \neq t}[\mathcal{C}_{a}(\rho_{P_t P_l})]^2,
\end{equation}
and
\begin{equation}\label{XD2}
\mathcal{C}(\rho_{P_t P_l})=\mathcal{C}_{a}(\rho_{P_t P_l})
\end{equation}
for all $l \neq t$ and $(P_1\cdots\widehat{P}_t\cdots P_N)=(P_1\cdots{P}_t\cdots P_N)-(P_t)$.
Here $P_{t}\cap P_{l}=\varnothing$ for $t\neq l$, and $\bigcup_{t}P_{t}=\mathcal{S}$.
For the case of $N=3$, we have $\mathcal{C}^{2}_a(\rho_{P_{1}|P_{2}P_{3}})=\mathcal{C}^{2}_a(\rho_{P_{1}P_{2}})+\mathcal{C}^{2}_a(\rho_{P_{1}P_{3}})$.
Consider $\rho_{A_{j_1}A_{j_2}\cdots A_{j_z}}$ as the reduced density matrix of an $n$-qudit GW state (\ref{gw}), with $\{P_1,P_2,P_3\}$ representing a partition of the set $\{A_{j_1},A_{j_2},\cdots,A_{j_z}\},$ where $3\leq z\leq n$. Given that any partition of an $n$-qudit GW state can be regarded as a GW state within  higher-dimensional systems~\cite{KS495301}, we have
\begin{eqnarray}\label{u}
  \mathcal{C}^{\gamma}_a(\rho_{P_{1}|P_{2}P_{3}})
  && = (\mathcal{C}^2_a(\rho_{P_{1}P_{2}})+\mathcal{C}^2_a(\rho_{P_{1}P_{3}})^{\frac{\gamma}{2}}\nonumber\\
  &&=\mathcal{C}^\gamma_a(\rho_{P_{1}P_{2}})\left(1+\frac{\mathcal{C}^2_a(\rho_{P_{1}P_{3}})}{\mathcal{C}^2_a(\rho_{P_{1}P_{2}})}\right)^{\frac{\gamma}{2}}\nonumber\\
  && \geq
  \mathcal{C}^\gamma_a(\rho_{P_{1}P_{2}})+\mathcal{C}^\gamma_a(\rho_{P_{1}P_{3}})
\end{eqnarray}
for $\gamma\geq2$. In the first inequality, we have utilized Eq.(\ref{XD1}). The third inequality arises from the application of the inequality $(1+\theta)^\delta\geq 1+\theta^\delta$, which holds for any real numbers $\delta$ and $\theta$ within the range $0 \leq \theta  \leq 1$ and $0\leq \delta\leq1$.

Denote $\{A,B_1,\cdots,B_{N-1}\}$ a partition of the set $\{A_{j_1},A_{j_2},\cdots,A_{j_t}\},$  $N\leq t\leq n$. In Ref.~\cite{XHLF}, Xie et al. improved the relation (\ref{u}) and established the following monogamy relation for an $N$-qudit GW state by using CoA,
\begin{eqnarray}\label{XHLF1}
\mathcal{C}^{\alpha}_{a}(\rho_{A|B_1\cdots B_{N-1}})&\geq&\mathcal{C}^\alpha_{a}(\rho_{AB_1})+\mathcal{V}_1\mathcal{C}^\alpha_{a}(\rho_{AB_2})+\cdots+\mathcal{V}_1\cdots\mathcal{V}_{z-1}\mathcal{C}^\alpha_{a}(\rho_{AB_z})\nonumber\\
& &+\mathcal{V}_1\cdots\mathcal{V}_{z}(\mathcal{V}_{z+1}\mathcal{C}^\alpha_{a}(\rho_{AB_{z+1}})+\cdots+\mathcal{V}_{N-2}\mathcal{C}^\alpha_{a}(\rho_{AB_{N-2}})\nonumber\\
& &+\mathcal{V}_1\cdots\mathcal{V}_{z}\mathcal{C}^\alpha_{a}(\rho_{AB_{N-1}}),
\end{eqnarray}
where $\alpha\geq\gamma ~(\gamma\geq2)$, with $\mathcal{C}^2_{a}(\rho_{AB_i})\geq l_i\mathcal{C}^2_{a}(\rho_{A|B_{i+1}\cdots B_{N-1}})$, $\mathcal{C}^2_{a}(\rho_{A|B_i\cdots B_{N-1}})\geq \mathcal{C}^2_{a}(\rho_{AB_i})+\omega_i\mathcal{C}^2_{a}(\rho_{A|B_{i+1}\cdots B_{N-1}})$ for $i=1,2,\cdots,z$,  $\mathcal{C}^2_{a}(\rho_{A|B_{j+1}\cdots B_{N-1}})\geq l_j\mathcal{C}^2_{a}(\rho_{AB_j})$, $\mathcal{C}^2_{a}(\rho_{A|B_j\cdots B_{N-1}})\geq\omega_j\mathcal{C}^2_{a}(\rho_{AB_j})+\mathcal{C}^2_{a}(\rho_{A|B_{j+1}\cdots B_{N-1}})$ for $j=z+1,
\cdots,N-2$, $1\leq z\leq N-3$ and $N\geq4$, where $\mathcal{V}_r=(\omega_r+l_r)^{\frac{\alpha}{\gamma}}-l_r^{\frac{\alpha}{\gamma}}$ with $1\le r\le N-2$, $\omega_r\geq1$ and $l_r\geq1$.

In the following, we explore tighter monogamy relations that are fulfilled by the $\alpha$-th ($\alpha\geq\gamma, ~\gamma\geq2$) power of CoA for the $N$-qudit GW states. We first introduce a lemma.

\begin{lemma}\label{L1}
Let $\tau\geq1$ and $\delta\geq1$ be any real numbers. For any $ \theta\geq \tau^{\delta}\geq0$ and non-negative real number $z$, we have for $z\geq 1$,
\begin{equation}\label{aa1}
\begin{aligned}
(1+\theta)^{z}-\theta^{z}\geq(1+\tau^{\delta})^{z}-(\tau^{\delta})^{z}.
\end{aligned}
\end{equation}

\begin{proof}
Consider the function $g(\theta,z)=(1+\theta)^{z}-\theta^{z}$ for $\theta\geq \tau^{\delta}\geq 1$ and $z\geq1$, $\tau\geq1$ and $\delta\geq1$. As $\frac{\partial g(\theta,z)}{\partial \theta}=z[(1+\theta)^{z-1}-\theta^{z-1}]\geq0$, the function $g(\theta,z)$ increases with respect to $\theta$. Since $\theta\geq \tau^{\delta}\geq 1$, $g(\theta,z)\geq g(\tau^{\delta},z)=(1+\tau^{\delta})^{z}-(\tau^{\delta})^{z}$. Thus, we derive the inequality (\ref{aa1}).
\end{proof}
\end{lemma}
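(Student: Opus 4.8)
The plan is to prove the inequality $(1+\theta)^z - \theta^z \geq (1+\tau^\delta)^z - (\tau^\delta)^z$ under the hypotheses $\theta \geq \tau^\delta \geq 0$, $z \geq 1$, $\tau \geq 1$, $\delta \geq 1$, by a monotonicity argument in the variable $\theta$. First I would fix all parameters except $\theta$ and introduce the auxiliary function $g(\theta, z) = (1+\theta)^z - \theta^z$, treating it as a function of $\theta$ on the interval $[\tau^\delta, \infty)$. Since $\tau \geq 1$ and $\delta \geq 1$, we have $\tau^\delta \geq 1$, so in fact $\theta \geq 1$ throughout, which is the regime where things are cleanest.

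The key step is the derivative computation: $\frac{\partial g}{\partial \theta} = z\big[(1+\theta)^{z-1} - \theta^{z-1}\big]$. I would argue this is nonnegative because $z \geq 1$ makes the prefactor $z$ nonnegative, and $z - 1 \geq 0$ together with $1 + \theta > \theta \geq 0$ makes $t \mapsto t^{z-1}$ nondecreasing on $[0,\infty)$, hence $(1+\theta)^{z-1} \geq \theta^{z-1}$. (For $z = 1$ the bracket is $1 - 1 = 0$, still nonnegative.) Therefore $g(\cdot, z)$ is nondecreasing in $\theta$ on $[\tau^\delta, \infty)$, and evaluating at the left endpoint gives $g(\theta, z) \geq g(\tau^\delta, z) = (1+\tau^\delta)^z - (\tau^\delta)^z$, which is exactly \eqref{aa1}.

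I do not anticipate a genuine obstacle here — the statement is a one-variable calculus lemma and the proof is essentially the three lines already visible in the excerpt. The only point requiring a little care is justifying that $t \mapsto t^{z-1}$ is monotone nondecreasing when $z - 1$ could be a non-integer real exponent: this is standard for $t \geq 0$ and exponent $\geq 0$, and one should note that at $\theta = 0$ with $0 < z - 1 < 1$ the derivative $\partial g/\partial\theta$ still makes sense (the hypothesis $\theta \geq 1$ sidesteps any issue at the origin anyway). A secondary bookkeeping remark: the hypothesis is stated as "$\theta \geq \tau^\delta \geq 0$" but the proof uses $\tau^\delta \geq 1$, which follows automatically from $\tau \geq 1$, $\delta \geq 1$; I would either invoke this implication explicitly or simply carry the condition $\tau^\delta \geq 1$ as the effective hypothesis. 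No deeper idea is needed.
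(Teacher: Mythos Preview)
Your proposal is correct and follows essentially the same argument as the paper: define $g(\theta,z)=(1+\theta)^z-\theta^z$, compute $\partial g/\partial\theta=z[(1+\theta)^{z-1}-\theta^{z-1}]\geq 0$, and conclude by monotonicity in $\theta$. Your version is slightly more careful in spelling out why $(1+\theta)^{z-1}\geq\theta^{z-1}$ and in noting that $\tau^\delta\geq 1$ follows from $\tau\geq 1,\ \delta\geq 1$, but the method is identical.
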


In the following, we denote by $\mathcal{C}_{aAB_j}=\mathcal{C}_{a}(\rho_{AB_j})$ for $j=1,2,\cdots,N-1$, and $\mathcal{C}_{aA|B_1B_2\cdots B_{N-1}}=\mathcal{C}_{a}(\rho_{A|B_1 B_2\cdots B_{N-1}})$.

\begin{theorem}\label{coa1}
Let $\rho_{A_{j_{1}}\cdots A_{j_{t}}}$ denote the reduced density  matrix of the $\mathrm{GW}$ state $|\mathcal{W}_{n}^{d}\rangle_{A_{1}\cdots A_{n}}$, and $P=\{A,B,C\}$ a partition of the set
$\{A_{j_{1}}\cdots A_{j_{t}}\}, 3\leq t\leq n$.
Let $\omega$, $l$ and $\delta$ are any real numbers satisfying $\omega\geq1$, $l\geq1$ and $\delta\geq1$ for all $\alpha\geq\gamma,~\gamma\geq2$. We have\\
(1) if $\mathcal{C}_{aAB}^\gamma\geq l^{\delta}\mathcal{C}_{aAC}^\gamma$, the CoA  satisfies
\begin{eqnarray}\label{x1}
\mathcal{C}_{aA|BC}^\alpha\geq \mathcal{C}_{aAB}^\alpha+((\omega+l^{\delta})^\frac{\alpha}{\gamma}
-l^\frac{\delta\alpha}{\gamma})\mathcal{C}_{aAC}^\alpha,
\end{eqnarray}\\
(2) if $\mathcal{C}_{aAC}^\gamma\geq l^{\delta}\mathcal{C}_{aAB}^\gamma$, the CoA  satisfies
\begin{eqnarray}\label{x11}
\mathcal{C}_{aA|BC}^\alpha\geq \mathcal{C}_{aAC}^\alpha+((\omega+l^{\delta})^\frac{\alpha}{\gamma}
-l^\frac{\delta\alpha}{\gamma})\mathcal{C}_{aAB}^\alpha.
\end{eqnarray}
\end{theorem}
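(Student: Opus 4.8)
The plan is to convert the statement into a one-variable calculus inequality by exploiting the structural identities of generalized $W$-class states, and then to finish with Lemma~\ref{L1} together with an elementary power estimate. \textbf{Step 1 (reduction via the GW structure).} Since $\rho_{A_{j_1}\cdots A_{j_t}}$ is a reduced state of $|\mathcal{W}^d_n\rangle$ and, by~\cite{KS495301}, any partition of a GW state is again a GW state on a higher-dimensional system, I would apply the $N=3$ case of~(\ref{XD1}) to the tripartition $\{A,B,C\}$ to obtain $\mathcal{C}_{aA|BC}^{2}=\mathcal{C}_{aAB}^{2}+\mathcal{C}_{aAC}^{2}$, together with $\mathcal{C}(\rho_{AB})=\mathcal{C}_{a}(\rho_{AB})$ and $\mathcal{C}(\rho_{AC})=\mathcal{C}_{a}(\rho_{AC})$. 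This turns the claim into a purely analytic inequality in the two nonnegative numbers $\mathcal{C}_{aAB},\mathcal{C}_{aAC}$, subject (for part (1)) to $\mathcal{C}_{aAB}^{\gamma}\ge l^{\delta}\mathcal{C}_{aAC}^{\gamma}$.

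\textbf{Step 2 (homogenization).} For part (1), the constraint together with $l,\delta\ge1$ forces $\mathcal{C}_{aAB}\ge\mathcal{C}_{aAC}$. If $\mathcal{C}_{aAC}=0$ the inequality is immediate from Step 1, so I would assume $\mathcal{C}_{aAC}>0$ and set $\theta:=\mathcal{C}_{aAB}^{2}/\mathcal{C}_{aAC}^{2}$; the hypothesis then reads $\theta\ge l^{2\delta/\gamma}$, and $\alpha/2\ge1$ since $\alpha\ge\gamma\ge2$. Writing $\mathcal{C}_{aA|BC}^{\alpha}=\mathcal{C}_{aAC}^{\alpha}(1+\theta)^{\alpha/2}$ and $\mathcal{C}_{aAB}^{\alpha}=\mathcal{C}_{aAC}^{\alpha}\theta^{\alpha/2}$ and dividing~(\ref{x1}) by $\mathcal{C}_{aAC}^{\alpha}$, it will suffice to show
\[
(1+\theta)^{\alpha/2}-\theta^{\alpha/2}\ \ge\ (\omega+l^{\delta})^{\alpha/\gamma}-l^{\delta\alpha/\gamma}\qquad\text{for every }\theta\ge l^{2\delta/\gamma}.
\]

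\textbf{Step 3 (Lemma~\ref{L1} and conclusion).} Here I would invoke Lemma~\ref{L1} with $z=\alpha/2\ge1$, $\tau=l^{2/\gamma}\ge1$ and exponent $\delta\ge1$, so that $\tau^{\delta}=l^{2\delta/\gamma}\le\theta$; this gives $(1+\theta)^{\alpha/2}-\theta^{\alpha/2}\ge(1+l^{2\delta/\gamma})^{\alpha/2}-l^{\delta\alpha/\gamma}$. It then remains to verify $(1+l^{2\delta/\gamma})^{\alpha/2}\ge(\omega+l^{\delta})^{\alpha/\gamma}$, equivalently --- raising to the positive power $\gamma/\alpha$ --- $(1+l^{2\delta/\gamma})^{\gamma/2}\ge\omega+l^{\delta}$. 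Since $\gamma/2\ge1$, the elementary inequality $(1+x)^{k}\ge1+x^{k}$ (for $x\ge0$, $k\ge1$) at $x=l^{2\delta/\gamma}$, $k=\gamma/2$ already yields $(1+l^{2\delta/\gamma})^{\gamma/2}\ge1+l^{\delta}$, and the residual comparison with $\omega+l^{\delta}$ is precisely where the standing hypotheses $\omega\ge1$, $l\ge1$, $\delta\ge1$, $\gamma\ge2$ are used. Part (2) follows verbatim after interchanging $B\leftrightarrow C$, i.e.\ with $\theta':=\mathcal{C}_{aAC}^{2}/\mathcal{C}_{aAB}^{2}\ge l^{2\delta/\gamma}$.

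\textbf{Main obstacle.} The hard part will be the bookkeeping in Steps 2--3: one must keep the exponent $\alpha/2$ (arising from raising the quadratic GW identity to the power $\alpha/2$) strictly separate from the exponent $\alpha/\gamma$ (forced by the $\gamma$-power normalization of the hypothesis), since conflating them breaks the monotonicity argument; and one must then confirm the scalar inequality $(1+l^{2\delta/\gamma})^{\gamma/2}\ge\omega+l^{\delta}$, the single place where the precise parameter ranges matter. Everything else --- the GW identities recalled above, the monotonicity of $g(\theta)=(1+\theta)^{z}-\theta^{z}$ underlying Lemma~\ref{L1}, and the convex-roof coincidences $\mathcal{C}=\mathcal{C}_{a}$ on the two pairs --- is quoted or immediate.
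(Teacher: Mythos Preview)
There is a genuine gap in Step~3. After Lemma~\ref{L1} you reach $(1+\theta)^{\alpha/2}-\theta^{\alpha/2}\ge(1+l^{2\delta/\gamma})^{\alpha/2}-l^{\delta\alpha/\gamma}$, and you then need $(1+l^{2\delta/\gamma})^{\alpha/2}\ge(\omega+l^{\delta})^{\alpha/\gamma}$, i.e.\ $(1+l^{2\delta/\gamma})^{\gamma/2}\ge\omega+l^{\delta}$. Your power estimate gives only $(1+l^{2\delta/\gamma})^{\gamma/2}\ge 1+l^{\delta}$, and the hypothesis $\omega\ge1$ points the \emph{wrong} way for the ``residual comparison''. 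Concretely, take $\gamma=2$, $l=\delta=1$: the left side equals $2$ while the right side is $\omega+1>2$ whenever $\omega>1$. In fact at $\gamma=\alpha=2$ the exact GW identity $\mathcal{C}_{aA|BC}^{2}=\mathcal{C}_{aAB}^{2}+\mathcal{C}_{aAC}^{2}$ shows that~(\ref{x1}) with an arbitrary $\omega>1$ is simply false whenever $\mathcal{C}_{aAC}>0$, so no argument from the quadratic identity alone can produce the $\omega$.

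The point you are missing is that $\omega$ is \emph{not} a free parameter: the paper's proof chooses (and Theorem~\ref{coaa2} explicitly assumes) $\omega\ge1$ such that $\mathcal{C}_{aA|BC}^{\gamma}\ge\mathcal{C}_{aAB}^{\gamma}+\omega\,\mathcal{C}_{aAC}^{\gamma}$, which is guaranteed by~(\ref{u}). The paper then raises this $\gamma$-power inequality to the power $\alpha/\gamma$ and applies Lemma~\ref{L1} to the ratio $\omega^{-1}\mathcal{C}_{aAB}^{\gamma}/\mathcal{C}_{aAC}^{\gamma}\ge\omega^{-1}l^{\delta}$, so the $\omega$-dependence is present from the outset and no extra scalar comparison is needed. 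Your quadratic-identity route can be repaired, but only by using this same constraint: writing $\theta=\mathcal{C}_{aAB}^{2}/\mathcal{C}_{aAC}^{2}$ one has $\omega\le(1+\theta)^{\gamma/2}-\theta^{\gamma/2}$ and $l^{\delta}\le\theta^{\gamma/2}$, and the monotonicity of $(x,y)\mapsto(x+y)^{\alpha/\gamma}-y^{\alpha/\gamma}$ in each variable (for $\alpha/\gamma\ge1$) then yields $(\omega+l^{\delta})^{\alpha/\gamma}-l^{\delta\alpha/\gamma}\le(1+\theta)^{\alpha/2}-\theta^{\alpha/2}$ directly, bypassing your Lemma~\ref{L1} step entirely.
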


\begin{proof}
For any tripartite state $\rho_{ABC}$, according to (\ref{u}),
we derive the relation,
$\mathcal{C}_{aA|BC}^\gamma\geq \mathcal{C}_{aAB}^\gamma+\mathcal{C}_{aAC}^\gamma$ for $\gamma\geq2$. Thus, it follows that there is some $\omega$ with $\omega\geq 1$ for which $\mathcal{C}_{aA|BC}^\gamma\geq \mathcal{C}_{aAB}^\gamma+\omega \mathcal{C}_{aAC}^\gamma$.
If $\mathcal{C}_{aAB}^\gamma\geq l^{\delta}\mathcal{C}_{aAC}^\gamma$, we get
\begin{eqnarray*}
\mathcal{C}_{aA|BC}^\alpha&=&(\mathcal{C}_{aA|BC}^\gamma)^{\frac{\alpha}{\gamma}}\geq(\mathcal{C}_{aAB}^\gamma+\omega \mathcal{C}_{aAC}^\gamma)^{\frac{\alpha}{\gamma}}\nonumber\\
&=&\omega^{\frac{\alpha}{\gamma}}\mathcal{C}_{aAC}^{\alpha}[(\omega^{-1}({\mathcal{C}_{aAB}^\gamma}/{\mathcal{C}_{aAC}^\gamma})+1)^{\frac{\alpha}{\gamma}}
-(\omega^{-1}({\mathcal{C}_{aAB}^\gamma}/{\mathcal{C}_{aAC}^\gamma}))^{\frac{\alpha}{\gamma}}]+\mathcal{C}_{aAB}^{\alpha}\nonumber\\
&\geq&\omega^\frac{\alpha}{\gamma}\mathcal{C}_{aAC}^{\alpha}[(\omega^{-1}l^{\delta}+1)^{\frac{\alpha}{\gamma}}-(\omega^{-1}l^{\delta})^{\frac{\alpha}{\gamma}}]
+\mathcal{C}_{aAB}^{\alpha}\nonumber\\
&=&\mathcal{C}_{aAB}^\alpha+((\omega+l^{\delta})^\frac{\alpha}{\gamma}
-l^\frac{\delta\alpha}{\gamma})\mathcal{C}_{aAC}^\alpha,
\end{eqnarray*}
where the second inequality follows from Lemma \ref{L1}. If $\mathcal{C}_{aAB}=0$, then $\mathcal{C}_{aAC}=0$, hence the lower bound is zero trivially. Thus, we get the inequality (\ref{x1}). If $\mathcal{C}_{aAC}^\gamma\geq l^{\delta}\mathcal{C}_{aAB}^\gamma$,  analogously we have $\mathcal{C}_{aA|BC}^\gamma\geq \mathcal{C}_{aAC}^\gamma+\omega \mathcal{C}_{aAB}^\gamma$, which gives rise to (\ref{x11}).
\end{proof}

For multi-qudit systems, the following theorem holds.

\begin{theorem}\label{coaa2}
Let $\rho_{A_{j_1}A_{j_2}\cdots A_{j_t}}$ denote the reduced density matrix of an $n$-qudit GW state $|\mathcal{W}_{n}^{d}\rangle_{A_{1}\cdots A_{n}}$ and $\{A,B_1,\cdots,B_{N-1}\}$ a partition of the set $\{A_{j_1},A_{j_2},\cdots,A_{j_t}\},$ $N\leq t\leq n$.
Consider the real numbers $\delta_{r}\geq1$, $\omega_r\geq1$, $\gamma\geq2$ and $l_r\geq1$, where $1\le r\le N-2$.
If $\mathcal{C}_{aAB_i}^\gamma\geq l_{i}^{\delta_{i}}\mathcal{C}_{aA|B_{i+1}\cdots B_{N-1}}^\gamma$, $\mathcal{C}_{aA|B_i\cdots B_{N-1}}^\gamma\geq \mathcal{C}_{aAB_i}^\gamma+\omega_i\mathcal{C}_{aA|B_{i+1}\cdots B_{N-1}}^\gamma$ for $i=1,2,\cdots,z$, and $\mathcal{C}_{aA|B_{j+1}\cdots B_{N-1}}^\gamma\geq l_{j}^{\delta_{j}}\mathcal{C}_{aAB_j}^\gamma$, $\mathcal{C}_{aA|B_j\cdots B_{N-1}}^\gamma\geq\omega_j\mathcal{C}_{aAB_j}^\gamma+\mathcal{C}_{aA|B_{j+1}\cdots B_{N-1}}^\gamma$ for $j=z+1,
\cdots,N-2$, $1\leq z\leq N-3$ and $N\geq4$, we get
\begin{eqnarray}
\mathcal{C}_{aA|B_1\cdots B_{N-1}}^\alpha&\geq&\mathcal{C}_{aAB_1}^\alpha+
\sum\limits_{i=2}^{z}\prod\limits_{s=1}^{i-1}\Omega_{s}
\mathcal{C}_{aAB_{i}}^{\alpha}\nonumber\\
& &+\Omega_1\cdots\Omega_{z}
\Big(\sum\limits_{j=z+1}^{N-2}
\Omega_{j}\mathcal{C}_{aAB_{j}}^{\alpha}\Big)\nonumber\\
& &+\Omega_1\cdots\Omega_{z}\mathcal{C}_{aAB_{N-1}}^\alpha
\end{eqnarray}
for all $\alpha\geq\gamma,~\gamma\geq2$, where $\Omega_{r}=(\omega_{r}+l_{r}^{\delta_{r}})^\frac{\alpha}{\gamma}
-l_{r}^\frac{\delta_{r}\alpha}{\gamma}$.
\end{theorem}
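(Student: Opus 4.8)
The plan is to deduce Theorem \ref{coaa2} by iterating the two-term splitting estimate of Theorem \ref{coa1}. The structural input is the fact already recorded in the excerpt (Eqs.~(\ref{XD1})--(\ref{u}), following Ref.~\cite{KS495301}): any partition of the reduced density matrix of a GW state is again a GW state on a coarse-grained, higher-dimensional system. Consequently, for every $1\le k\le N-2$, grouping $B_{k+1},\dots,B_{N-1}$ into a single party yields a tripartite configuration $\rho_{A|B_k|(B_{k+1}\cdots B_{N-1})}$ to which Theorem \ref{coa1} legitimately applies, with ``$B$''$=B_k$ and ``$C$''$=B_{k+1}\cdots B_{N-1}$. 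This observation is what lets the recursion run at every nesting level.

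First I would peel off $B_1,\dots,B_z$ using part (1) of Theorem \ref{coa1}. For each such index $i$ the hypotheses $\mathcal{C}_{aAB_i}^\gamma\ge l_i^{\delta_i}\mathcal{C}_{aA|B_{i+1}\cdots B_{N-1}}^\gamma$ and $\mathcal{C}_{aA|B_i\cdots B_{N-1}}^\gamma\ge \mathcal{C}_{aAB_i}^\gamma+\omega_i\mathcal{C}_{aA|B_{i+1}\cdots B_{N-1}}^\gamma$ are exactly the input of part (1), so $\mathcal{C}_{aA|B_i\cdots B_{N-1}}^\alpha\ge \mathcal{C}_{aAB_i}^\alpha+\Omega_i\,\mathcal{C}_{aA|B_{i+1}\cdots B_{N-1}}^\alpha$. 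Substituting these back one into another for $i=1,\dots,z$ (and using $\Omega_r\ge0$ to preserve the inequality direction, with the degenerate case $\mathcal{C}_{aAB_i}=0$ treated exactly as in the proof of Theorem \ref{coa1}) produces
$$
\mathcal{C}_{aA|B_1\cdots B_{N-1}}^\alpha\ \ge\ \mathcal{C}_{aAB_1}^\alpha+\sum_{i=2}^{z}\Big(\prod_{s=1}^{i-1}\Omega_s\Big)\mathcal{C}_{aAB_i}^\alpha+\Big(\prod_{s=1}^{z}\Omega_s\Big)\mathcal{C}_{aA|B_{z+1}\cdots B_{N-1}}^\alpha .
$$

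Next I would expand the remaining tail $\mathcal{C}_{aA|B_{z+1}\cdots B_{N-1}}^\alpha$ by part (2) of Theorem \ref{coa1}, for which the hypotheses reverse at the cut $z$: for $j=z+1,\dots,N-2$ one has $\mathcal{C}_{aA|B_{j+1}\cdots B_{N-1}}^\gamma\ge l_j^{\delta_j}\mathcal{C}_{aAB_j}^\gamma$ and $\mathcal{C}_{aA|B_j\cdots B_{N-1}}^\gamma\ge \omega_j\mathcal{C}_{aAB_j}^\gamma+\mathcal{C}_{aA|B_{j+1}\cdots B_{N-1}}^\gamma$, which is precisely the input of part (2) with ``$B$''$=B_j$, ``$C$''$=B_{j+1}\cdots B_{N-1}$, giving $\mathcal{C}_{aA|B_j\cdots B_{N-1}}^\alpha\ge \mathcal{C}_{aA|B_{j+1}\cdots B_{N-1}}^\alpha+\Omega_j\,\mathcal{C}_{aAB_j}^\alpha$. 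Iterating from $j=z+1$ to $j=N-2$ and finishing with the marginal identity $\mathcal{C}_{aA|B_{N-1}}^\alpha=\mathcal{C}_{aAB_{N-1}}^\alpha$ yields $\mathcal{C}_{aA|B_{z+1}\cdots B_{N-1}}^\alpha\ge\sum_{j=z+1}^{N-2}\Omega_j\,\mathcal{C}_{aAB_j}^\alpha+\mathcal{C}_{aAB_{N-1}}^\alpha$. Plugging this into the previous display and distributing the factor $\prod_{s=1}^{z}\Omega_s$ gives exactly the asserted inequality; the whole argument is a finite induction on the number of remaining parties, the base case $N=4$ (forcing $z=1$) amounting to one use of part (1) followed by one use of part (2).

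The bulk of the work is bookkeeping — tracking the telescoping coefficients $\prod_{s}\Omega_s$ through the substitutions and keeping the two hypothesis patterns aligned with parts (1) and (2) on the correct sides of the cut $z$. The only genuinely delicate point is the repeated invocation of Theorem \ref{coa1} at nested levels where ``$C$'' is a composite $B_{k+1}\cdots B_{N-1}$: this is exactly where one must cite that coarse-graining a GW state leaves (the reduced state of) a GW state, so that Eqs.~(\ref{XD1})--(\ref{u}) and hence Theorem \ref{coa1} still hold; once that is granted, everything else reduces to the elementary monotonicity inequality of Lemma \ref{L1} already used inside Theorem \ref{coa1}.
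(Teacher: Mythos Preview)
Your proposal is correct and follows essentially the same approach as the paper: iterate Theorem~\ref{coa1}, applying part~(1) for $i=1,\dots,z$ to obtain the chain (\ref{coa11}), applying part~(2) for $j=z+1,\dots,N-2$ to obtain the chain (\ref{coa22}), and combine. Your explicit remarks about coarse-graining GW states to justify the nested applications and about $\Omega_r\ge 0$ to preserve inequality direction are exactly the tacit points the paper relies on.
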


\begin{proof}
According to Theorem \ref{coa1}, if $\mathcal{C}_{aAB_i}^\gamma\geq l_{i}^{\delta_{i}}\mathcal{C}_{aA|B_{i+1}\cdots B_{N-1}}^\gamma$, $\mathcal{C}_{aA|B_i\cdots B_{N-1}}^\gamma\geq \mathcal{C}_{aAB_i}^\gamma+\omega_i\mathcal{C}_{aA|B_{i+1}\cdots B_{N-1}}^\gamma$ with $i=1,2,\cdots,z$, we have
\begin{eqnarray}\label{coa11}
\mathcal{C}_{aA|B_1\cdots B_{N-1}}^\alpha&\geq&\mathcal{C}_{aAB_1}^\alpha+\Omega_1\mathcal{C}_{aA|B_2\cdots B_{N-1}}^\alpha\nonumber\\
&\geq&\mathcal{C}_{aAB_1}^\alpha+\Omega_1\mathcal{C}_{aAB_2}^\alpha+\Omega_1\Omega_{2}\mathcal{C}_{aA|B_3\cdots B_{N-1}}^\alpha\geq\cdots\nonumber\\
&\geq&\mathcal{C}_{aAB_1}^\alpha+\Omega_1\mathcal{C}_{aAB_2}^\alpha+\cdots+\Omega_1\cdots\Omega_{z-1}\mathcal{C}_{aAB_z}^\alpha\nonumber\\
& &+\Omega_1\cdots\Omega_{z}\mathcal{C}_{aA|B_{z+1}\cdots B_{N-1}}^\alpha.
\end{eqnarray}
If $\mathcal{C}_{aA|B_{j+1}\cdots B_{N-1}}^\gamma\geq l_{j}^{\delta_{j}}\mathcal{C}_{aAB_j}^\gamma$, $\mathcal{C}_{aA|B_j\cdots B_{N-1}}^\gamma\geq\omega_j\mathcal{C}_{aAB_j}^\gamma+\mathcal{C}_{aA|B_{j+1}\cdots B_{N-1}}^\gamma$ with $j=z+1,
\cdots,N-2$, we have
\begin{eqnarray}\label{coa22}
\mathcal{C}_{aA|B_{z+1}\cdots B_{N-1}}^\alpha&\geq&\Omega_{z+1}\mathcal{C}_{aAB_{z+1}}^\alpha+\mathcal{C}_{aA|B_{z+2}\cdots B_{N-1}}^\alpha\nonumber\\
&\geq&\Omega_{z+1}\mathcal{C}_{aAB_{z+1}}^\alpha+\Omega_{z+2}\mathcal{C}_{aAB_{z+2}}^\alpha+\mathcal{C}_{aA|B_{z+3}\cdots B_{N-1}}^\alpha\geq\cdots\nonumber\\
&\geq&\Omega_{z+1}\mathcal{C}_{aAB_{z+1}}^\alpha+\Omega_{z+2}\mathcal{C}_{aAB_{z+2}}^\alpha+\cdots+\Omega_{N-2}\mathcal{C}_{aAB_{N-2}}^\alpha+\mathcal{C}_{aAB_{N-1}}^\alpha.
\end{eqnarray}
Through the conjunction of inequalities (\ref{coa11}) and (\ref{coa22}), we conclude the proof.
\end{proof}

\noindent{\bf Remark 1} \, \
We have established a class of lower bounds for the $\alpha$-th ($\alpha\geq\gamma,~\gamma\geq2$) of CoA, utilizing various parameters $\delta_{r}$, $\omega_{r}$ and $l_{r}$  for the $N$-qudit GW states, where $~\omega_{r}\geq1,~l_{r}\geq1,~\delta_{r}\geq1$, $1\leq r\leq N-2$. When $\delta_{r}=1$, Theorem \ref{coaa2} reduces to the result (\ref{XHLF1}) given in \cite{XHLF}. Set $\gamma=2$ and $d=2$. When $\delta_{r}=1,~\omega_{r}=1,~l_{r}=1$, it is clear that the inequality (\ref{JZX1}) presented in \cite{jzx2} is actually a special instance of Theorem 2 that we have presented.
Additionally, the lower bound for the $\alpha$-th power of CoA becomes tighter as $l_{r}$, $\delta_{r}$ (or $\omega_{r}$) increases, respectively. Since
\begin{equation}
(\omega_{r}+l_{r}^{\delta_{r}})^{\frac{\alpha}{\gamma}}-(l_{r}^{\delta_{r}})^{\frac{\alpha}{\gamma}}
\geq(\omega_{r}+l_{r})^{\frac{\alpha}{\gamma}}-l_{r}^{\frac{\alpha}{\gamma}}
\geq 2^{\frac{\alpha}{\gamma}}-1,
\end{equation}
when $\delta_{r}=1$, the first equality holds, and with $\omega_{r}=1,~l_{r}=1$, the second equality is observed, our monogamy inequalities are tighter than the ones established in Refs.~\cite{XHLF,jzx1,jzx2,ZXN}. To exemplify these advantages, we provide a comprehensive example below.

\noindent{\bf Example 1} \, \ Let us consider the following four-qubit GW state,
\begin{equation}\label{Con6}
|\psi\rangle_{A_{1}A_{2}A_{3}A_{4}}=0.3|0001\rangle+0.4|0010\rangle+0.5|0100\rangle
+\sqrt{0.5}|1000\rangle.
\end{equation}
Tracing over the subsystem $A_4$ we have $\rho_{A_{1}A_{2}A_{3}}=\rho_{ABC}=0.09|000\rangle\langle000|+(0.4|001\rangle+0.5|010\rangle
+\sqrt{0.5}|100\rangle)
(0.4\langle001|+0.5\langle010|+\sqrt{0.5}\langle100|)$. By simple calculation, we obtain $\mathcal{C}_{aA|BC}=\sqrt{\frac{41}{50}}$, $\mathcal{C}_{aAB}=\frac{\sqrt{2}}{2}$ and
$\mathcal{C}_{aAC}=\frac{2\sqrt{2}}{5}$.
Since $\mathcal{C}_{aAB}^\gamma\geq l^{\delta}\mathcal{C}_{aAC}^\gamma$, we have $l^{\delta}\leq \frac{25}{16}$ for $\gamma=2$. Setting $\delta=1.3$, we get $1\leq l\leq1.40959$.
Let $h=1.3$ and $\mu=1$. Then $\mathcal{C}_{aAB}^\alpha+\mathcal{C}_{aAC}^\alpha
=\big(\frac{\sqrt{2}}{2}\big)^\alpha+\big(\frac{2\sqrt{2}}{5}\big)^\alpha$ from Eq.(\ref{ZXN1}) in Ref.~\cite{ZXN},
$\mathcal{C}_{aAB}^\alpha+\frac{\alpha}{2}\mathcal{C}_{aAC}^\alpha=\big(\frac{\sqrt{2}}{2}\big)^\alpha+\frac{\alpha}{2}\big(\frac{2\sqrt{2}}{5}\big)^\alpha$
from Eq.(\ref{jzx2}) in Ref.~\cite{jzx1},
$\mathcal{C}_{aAB}^\alpha+\big(2^{\frac{\alpha}{2}}-1\big)\mathcal{C}_{aAC}^\alpha=\big(\frac{\sqrt{2}}{2}\big)^\alpha+
\big(2^{\frac{\alpha}{2}}-1\big)\big(\frac{2\sqrt{2}}{5}\big)^\alpha$ from Eq.(\ref{JZX1}) in Ref.~\cite{jzx2},
$\mathcal{C}_{aAB}^\alpha+[(\omega+l)^{\frac{\alpha}{2}}-h^{\frac{\alpha}{2}}]\mathcal{C}_{aAC}^\alpha=\big(\frac{\sqrt{2}}{2}\big)^\alpha+[(1+1.3)^{\frac{\alpha}{2}}-1.3^{\frac{\alpha}{2}}]\big(\frac{2\sqrt{2}}{5}\big)^\alpha$
from Eq.(\ref{XHLF1}) in Ref.~\cite{XHLF} and $\mathcal{C}_{aAB}^\alpha+[(\omega+l^{\delta})^{\frac{\alpha}{2}}-(h^{\delta})^{\frac{\alpha}{2}}]\mathcal{C}_{aAC}^\alpha=\big(\frac{\sqrt{2}}{2}\big)^\alpha+[(1+1.3^{\delta})^{\frac{\alpha}{2}}-1.3^{\frac{\delta\alpha}{2}}]\big(\frac{2\sqrt{2}}{5}\big)^\alpha$
from our result (\ref{x1}). It is evident that our result (\ref{x1}) is better than the result (\ref{XHLF1}) from Ref.~\cite{XHLF} when $\alpha\geq2$, hence better than Eq.(\ref{ZXN1}), Eq.(\ref{jzx2}) and Eq.(\ref{JZX1}) given in Ref.~\cite{ZXN} , Ref.~\cite{jzx1} and Ref.~\cite{jzx2}, respectively, see Fig.\ref{Fig1}.
\begin{figure}
  \centering
  % Requires \usepackage{graphicx}
  \includegraphics[width=10cm]{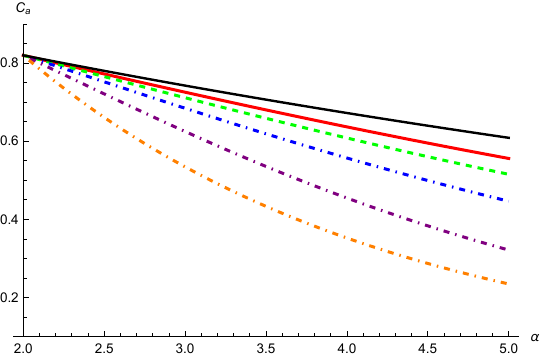}
  \caption{The $y$-axis represents the lower bound for the concurrence of assistance $\mathcal{C}^\alpha_{aA|BC}$.
  Solid black line denotes $\mathcal{C}^\alpha_{aA|BC}$ for the state given in Eq.(\ref{Con6}). The red thick (green dashed, blue dot dashed thick, purple dot dashed and orange dot dashed) line corresponds to the lower bound derived from our result (\ref{x1}) (Ref.~\cite{XHLF}, Ref.~\cite{jzx2}, Ref.~\cite{jzx1} and Ref.~\cite{ZXN}) respectively. }
  \label{Fig1}
\end{figure}

\section{Tighter complementary monogamy relations for concurrence of assistance}\label{s3}
In this section, we present  new analytical monogamy inequalities for the $\beta$-th~ ($0\leq\beta\leq\frac{\gamma}{2},~\gamma\geq2$) power of the CoA for multi-qudit GW states. These inequalities are shown to be tighter than those from previous studies, as evidenced by our detailed examples. We refer to them as complementary monogamy relations, complementing the $\alpha$-th~($\alpha\geq\gamma,~\gamma\geq2$)  power relations from the previous section.

Let $\rho_{A_{j_{1}}\cdots A_{j_{t}}}$ denote the reduced density  matrix of the $\mathrm{GW}$ state $|\mathcal{W}_{n}^{d}\rangle_{A_{1}\cdots A_{n}}$ and $\{A,B_1,\cdots,B_{N-1}\}$ a partition of the set $\{A_{j_1},A_{j_2},\cdots,A_{j_t}\},$  $N\leq t\leq n$.
If $\mathcal{C}_{a}(\rho_{AB_i})\leq \mathcal{C}_{a}(\rho_{A|B_{i+1}\cdots B_{N-2}})$ for $i=1, 2, \cdots, z$, and
$\mathcal{C}_{a}(\rho_{AB_j})\geq \mathcal{C}_{a}(\rho_{A|B_{j+1}\cdots B_{N-1}})$ for $j=z+1,\cdots,N-2$, $0\leq\beta\leq\gamma,~\gamma\geq2$, then the CoA $\mathcal{C}_a(\rho_{A|B_1\cdots B_{N-1}})$ fulfills
the monogamy inequality \cite{SX032344},
\begin{equation}\label{SX}
  \mathcal{C}_a^\beta(\rho_{A|B_1B_2\cdots B_{N-1}})\geq
  \sum^{z}_{i=1}l^{i}\mathcal{C}_a^\beta(\rho_{AB_i})+l^{z}\sum^{N-2}_{i=z+1}\mathcal{C}_a^\beta(\rho_{AB_i})
  +l^{N-1}\mathcal{C}_a^\beta(\rho_{AB_{N-1}}),
\end{equation}
 where $l=2^{\frac{\beta}{\gamma}}-1$.
In Ref.~\cite{LZZ} the relation (\ref{SX}) has been refined the following form
\begin{equation}\label{LYY}
  \mathcal{C}_a^\beta(\rho_{A|B_1B_2\cdots B_{N-1}})\geq
  \sum^{z}_{i=1}l^{i-1}\mathcal{C}_a^\beta(\rho_{AB_i})+l^{z}\sum^{N-2}_{i=z+1}\mathcal{C}_a^\beta(\rho_{AB_i})
  +l^{z+1}\mathcal{C}_a^\beta(\rho_{AB_{N-1}})
\end{equation}
for all $0\leq\beta\leq\gamma,~\gamma\geq2$, where $l=\frac{(1+k)^{\frac{\beta}{\gamma}}-1}{k^{\frac{\beta}{\gamma}}}$,
if $l\mathcal{C}^{\alpha}_a(\rho_{AB_i})\leq \mathcal{C}^{\alpha}_a(\rho_{A|B_{i+1}\cdots B_{N-2}})$ for $i=1, 2, \cdots, z$, and
$\mathcal{C}^{\alpha}_a(\rho_{AB_j})\geq l\mathcal{C}^{\alpha}_a(\rho_{A|B_{j+1}\cdots B_{N-1}})$ for $j=z+1,\cdots,N-2$.

The relation (\ref{LYY}) has been further refined to be
\begin{eqnarray}\label{XHLF2}
\mathcal{C}^{\beta}_{a}(\rho_{A|B_1\cdots B_{N-1}})&\geq&\mathcal{C}^\beta_{a}(\rho_{AB_1})+\mathcal{V}_1\mathcal{C}^\beta_{a}(\rho_{AB_2})+\cdots+\mathcal{V}_1\cdots\mathcal{V}_{z-1}\mathcal{C}^\beta_{a}(\rho_{AB_z})\nonumber\\
& &+\mathcal{V}_1\cdots\mathcal{V}_{z}(\mathcal{V}_{z+1}\mathcal{C}^\beta_{a}(\rho_{AB_{z+1}})+\cdots+\mathcal{V}_{N-2}\mathcal{C}^\beta_{a}(\rho_{AB_{N-2}})\nonumber\\
& &+\mathcal{V}_1\cdots\mathcal{V}_{z}\mathcal{C}^\beta_{a}(\rho_{AB_{N-1}})
\end{eqnarray}
in Ref.\cite{XHLF} for all $0\leq\beta\leq\gamma,~\gamma\geq2$, where $\mathcal{V}_{p}=(\omega_p+l_p)^{\frac{\beta}{\gamma}}-l_p^{\frac{\beta}{\gamma}}$
for real numbers $\omega_{p}\geq 1$, $l_{p}\geq 1$ and $1\leq p\leq N-3,$ if $
\mathcal{C}_{a}^{\gamma}(\rho_{AB_i})\leq l_{i}\mathcal{C}_{a}^{\gamma}(\rho_{A|B_i+1\cdots B_N-1}),
\mathcal{C}_{a}^{\gamma}(\rho_{A|B_i\cdots B_N-1})\geq \mathcal{C}_{a}^{\gamma}(\rho_{AB_i})+\omega_{i}\mathcal{C}_{a}^{\gamma}(\rho_{A|B_i+1\cdots B_N-1}) $ for $i=1, 2, \cdots, z$, and
$l_{j}\mathcal{C}_{a}^{\gamma}(\rho_{AB_j})\geq \mathcal{C}_{a}^{\gamma}(\rho_{A|B_j+1\cdots B_N-1}),
\mathcal{C}_{a}^{\gamma}(\rho_{A|B_j\cdots B_N-1})\geq \omega_{j}\mathcal{C}_{a}^{\gamma}(\rho_{AB_j})+
\mathcal{C}_{a}^{\gamma}(\rho_{A|B_j+1\cdots B_N-1})$
for $j=z+1,\ldots,N-2,~1\leq z\leq N-3,~N\geq 4$.

Actually, serving as an indicator of the entanglement distribution among the subsystems, the monogamy inequalities adhered to by the CoA can be enhanced for increased tightness. We present the following lemma before giving our result.
\begin{lemma}\label{L2}
Consider $\frac{1}{2}\leq p\leq1$ and $0\leq r\leq\frac{1}{2}$ be arbitrary real numbers. For any $ 0\leq x\leq y,~0\leq y\leq1$ , it holds that
\begin{equation}\label{aaaa1}
\begin{aligned}
(1+x)^{r}-(px)^{r}\geq(1+y)^{r}-(py)^{r}.
\end{aligned}
\end{equation}

\begin{proof}
Let $u(x,r)=(1+x)^{r}-(px)^{r}$ for $ 0\leq x\leq y,~0\leq y\leq1$, $1/2\leq p\leq1$ and $0\leq r\leq1/2$. We have $\frac{\partial u(x,r)}{\partial x}=rx^{r-1}\big((1+\frac{1}{x})^{r-1}-p^{r}\big)\equiv rx^{r-1}v(x,r)$, where
$v(x,r)=(1+\frac{1}{x})^{r-1}-p^{r}$. Since
$\frac{\partial v(x,r)}{\partial r}=(1+\frac{1}{x})^{r-1}\ln(1+\frac{1}{x})-p^{r}\ln(p)$,
then $1+\frac{1}{x}\geq 2,~\ln(p)<0$, and  $v(x,r)$ increases with $r$, with $v_{\max}(x,r)=v(x,\frac{1}{2})=\big(\frac{1}{1+\frac{1}{x}}\big)^{\frac{1}{2}}
-p^{\frac{1}{2}}\leq0$. That is to say, $u(x,r)$ decreases with $x$.
As $0\leq x\leq y$, $u(x,r)\geq u(y,r)=(1+y)^{r}-(py)^{r}$. Therefore, we get the inequality (\ref{aaaa1}).
\end{proof}
\end{lemma}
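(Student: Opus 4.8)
The plan is to prove Lemma \ref{L2} by a single-variable monotonicity argument: fix the exponent $r$ and show that $u(x,r)=(1+x)^{r}-(px)^{r}$ is non-increasing in $x$ on $[0,1]$, so that $0\le x\le y\le 1$ immediately gives $u(x,r)\ge u(y,r)$, which is exactly (\ref{aaaa1}). Differentiating, $\frac{\partial u}{\partial x}=r(1+x)^{r-1}-rp^{r}x^{r-1}=rx^{r-1}\bigl((1+\tfrac1x)^{r-1}-p^{r}\bigr)$; since $r>0$ and $x^{r-1}>0$ the sign of the derivative is governed by $v(x,r):=(1+\tfrac1x)^{r-1}-p^{r}$, and it suffices to show $v(x,r)\le 0$ for all admissible $x$ and $r$.

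To bound $v$, I would use that $r-1\le -\tfrac12<0$ together with $1+\tfrac1x\ge 2$ (valid since $0<x\le 1$), so $(1+\tfrac1x)^{r-1}\le 2^{r-1}\le\bigl(\tfrac{1}{1+1/x}\bigr)^{1/2}$ is too crude — instead, note that $x\mapsto(1+\tfrac1x)^{r-1}$ is largest when $\tfrac1x$ is smallest, i.e.\ at $x=1$, giving $(1+\tfrac1x)^{r-1}\le 2^{r-1}\le 2^{-1/2}$. On the other side, $p^{r}\ge p^{1/2}\ge(\tfrac12)^{1/2}=2^{-1/2}$ because $\tfrac12\le p\le1$ forces $p^{r}$ to decrease in $r$ and hence $p^{r}\ge p^{1/2}$, and $p\ge\tfrac12$ gives $p^{1/2}\ge 2^{-1/2}$. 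Combining, $v(x,r)\le 2^{-1/2}-2^{-1/2}=0$. (This is essentially the ``$v_{\max}=v(x,\tfrac12)\le 0$'' computation indicated in the statement, reorganized so the estimate on $(1+\tfrac1x)^{r-1}$ is also taken at its worst case.)

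The mild subtlety — and the step I would be most careful about — is the behavior at the endpoint $x=0$, where $x^{r-1}\to\infty$; but $u(\cdot,r)$ extends continuously to $x=0$ with $u(0,r)=1$ (since $(px)^{r}\to 0$ as $x\to 0^{+}$ for $r>0$), and $u$ being continuous on $[0,1]$ and differentiable with $u'\le 0$ on $(0,1]$ is enough to conclude $u$ is non-increasing on all of $[0,1]$, so the claimed inequality holds including the boundary cases $x=0$ or $y=1$. I would also note the degenerate case $r=0$, where $u\equiv 0$ and (\ref{aaaa1}) is a trivial equality, so one may assume $0<r\le\tfrac12$ when dividing by $r$. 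Assembling these observations yields $(1+x)^{r}-(px)^{r}\ge(1+y)^{r}-(py)^{r}$ for $0\le x\le y\le 1$, completing the proof.
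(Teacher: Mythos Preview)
Your proof is correct and follows essentially the same route as the paper: define $u(x,r)=(1+x)^r-(px)^r$, differentiate in $x$, factor the derivative as $rx^{r-1}v(x,r)$ with $v(x,r)=(1+\tfrac1x)^{r-1}-p^r$, and conclude $u$ is non-increasing once $v\le 0$. The only minor variation is in the sub-argument for $v\le 0$: the paper shows $v$ is increasing in $r$ (via $\partial v/\partial r\ge 0$) and then checks $v(x,\tfrac12)\le 0$, whereas you bound each term of $v$ separately at its worst case ($(1+\tfrac1x)^{r-1}\le 2^{-1/2}$ and $p^r\ge 2^{-1/2}$), arriving at the same numerical comparison; your handling of the endpoints $x=0$ and $r=0$ is in fact more careful than the paper's.
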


Utilizing the above lemma, we derive below the general monogamy inequalities for the $\beta$-th ($0\leq\beta\leq\frac{\gamma}{2},~\gamma\geq2$) power of the CoA for multi-qudit GW states.

\begin{theorem}\label{coa2}
Let $\rho_{A_{j_{1}}\cdots A_{j_{t}}}$ denote the reduced density  matrix of the $\mathrm{GW}$ state $|\mathcal{W}_{n}^{d}\rangle_{A_{1}\cdots A_{n}}$ and $P=\{A,B,C\}$ a partition of the set
$\{A_{j_{1}}\cdots A_{j_{t}}\},~3\leq t\leq n$.
There exists real number $\omega\geq1$ such that\\
(1) if $\mathcal{C}_{aAB}^\gamma\leq l\mathcal{C}_{aAC}^\gamma$, the CoA  satisfies
\begin{eqnarray}\label{xx1}
\mathcal{C}_{aA|BC}^\beta\geq p^{\frac{\beta}{\gamma}}\mathcal{C}_{aAB}^\beta+[(\omega+l)^{\frac{\beta}{\gamma}}
-(pl)^{\frac{\beta}{\gamma}}]\mathcal{C}_{aAC}^\beta
\end{eqnarray}
for all $0\leq\beta\leq\frac{\gamma}{2},~\gamma\geq2,~1/2\leq p\leq1,~0\leq l \leq1$;\\
(2) if $\mathcal{C}_{aAC}^\gamma\leq l\mathcal{C}_{aAB}^\gamma$, the CoA  satisfies
\begin{eqnarray}\label{xx2}
\mathcal{C}_{aA|BC}^\beta\geq [(\omega+l)^{\frac{\beta}{\gamma}}-(pl)^{\frac{\beta}{\gamma}}]
\mathcal{C}_{aAB}^\beta+p^{\frac{\beta}{\gamma}}\mathcal{C}_{aAC}^\beta
\end{eqnarray}
for all $0\leq\beta\leq\frac{\gamma}{2},~\gamma\geq2,~1/2\leq p\leq1,~0\leq l \leq1$.
\end{theorem}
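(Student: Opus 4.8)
The plan is to follow the pattern of the proof of Theorem~\ref{coa1}, invoking Lemma~\ref{L2} in place of Lemma~\ref{L1}. First, exactly as in the proof of Theorem~\ref{coa1}, relation~(\ref{u}) gives $\mathcal{C}_{aA|BC}^{\gamma}\geq\mathcal{C}_{aAB}^{\gamma}+\mathcal{C}_{aAC}^{\gamma}$ for $\gamma\geq2$, so there is a real number $\omega\geq1$ with $\mathcal{C}_{aA|BC}^{\gamma}\geq\mathcal{C}_{aAB}^{\gamma}+\omega\,\mathcal{C}_{aAC}^{\gamma}$; if $\mathcal{C}_{aAC}=0$ then the hypothesis $\mathcal{C}_{aAB}^{\gamma}\leq l\,\mathcal{C}_{aAC}^{\gamma}$ (with $0\leq l\leq1$) forces $\mathcal{C}_{aAB}=0$ and~(\ref{xx1}) is trivial, so I may assume $\mathcal{C}_{aAC}>0$.

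For the main step I would set $r=\beta/\gamma\in[0,\tfrac12]$ and write $\mathcal{C}_{aA|BC}^{\beta}=(\mathcal{C}_{aA|BC}^{\gamma})^{r}\geq(\mathcal{C}_{aAB}^{\gamma}+\omega\,\mathcal{C}_{aAC}^{\gamma})^{r}$. Factoring out $(\omega\,\mathcal{C}_{aAC}^{\gamma})^{r}$ and setting $x=\mathcal{C}_{aAB}^{\gamma}/(\omega\,\mathcal{C}_{aAC}^{\gamma})$ and $y=l/\omega$, the hypothesis $\mathcal{C}_{aAB}^{\gamma}\leq l\,\mathcal{C}_{aAC}^{\gamma}$ gives $x\leq y$, while $l\leq1$ and $\omega\geq1$ give $y\leq1$; hence $0\leq x\leq y\leq1$ and Lemma~\ref{L2} (with this $r$ and the given $p\in[\tfrac12,1]$) yields $(1+x)^{r}-(px)^{r}\geq(1+y)^{r}-(py)^{r}$. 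Multiplying through by $(\omega\,\mathcal{C}_{aAC}^{\gamma})^{r}$ and using $\omega\,\mathcal{C}_{aAC}^{\gamma}\,x=\mathcal{C}_{aAB}^{\gamma}$ together with $\omega\,\mathcal{C}_{aAC}^{\gamma}(1+y)=(\omega+l)\,\mathcal{C}_{aAC}^{\gamma}$, the four terms collapse exactly to $\mathcal{C}_{aA|BC}^{\beta}-p^{\beta/\gamma}\mathcal{C}_{aAB}^{\beta}\geq\big((\omega+l)^{\beta/\gamma}-(pl)^{\beta/\gamma}\big)\mathcal{C}_{aAC}^{\beta}$, which is~(\ref{xx1}). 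Part~(2) follows by the same computation with the roles of $B$ and $C$ interchanged (relation~(\ref{u}) is symmetric in them): write $\mathcal{C}_{aA|BC}^{\gamma}\geq\mathcal{C}_{aAC}^{\gamma}+\omega\,\mathcal{C}_{aAB}^{\gamma}$, factor out $(\omega\,\mathcal{C}_{aAB}^{\gamma})^{r}$, set $x=\mathcal{C}_{aAC}^{\gamma}/(\omega\,\mathcal{C}_{aAB}^{\gamma})$ and $y=l/\omega$, and apply Lemma~\ref{L2} to obtain~(\ref{xx2}).

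I expect the only delicate point to be the bookkeeping in the main step: choosing $x$ and $y$ so that the exponent $r=\beta/\gamma$, the parameter $p$, and the constant $l$ match the hypotheses of Lemma~\ref{L2}, and verifying that all the quantities involved stay in $[0,1]$ — which is precisely where $0\leq l\leq1$, $\omega\geq1$ and $0\leq\beta\leq\tfrac{\gamma}{2}$ are consumed. No estimate beyond Lemma~\ref{L2} and relation~(\ref{u}) is needed, and the edge cases $\mathcal{C}_{aAC}=0$ (resp.\ $\mathcal{C}_{aAB}=0$) are immediate.
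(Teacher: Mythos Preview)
Your proposal is correct and follows essentially the same route as the paper's own proof: both start from relation~(\ref{u}) to obtain $\mathcal{C}_{aA|BC}^{\gamma}\geq\mathcal{C}_{aAB}^{\gamma}+\omega\,\mathcal{C}_{aAC}^{\gamma}$, factor out $(\omega\,\mathcal{C}_{aAC}^{\gamma})^{\beta/\gamma}$, and apply Lemma~\ref{L2} with $x=\mathcal{C}_{aAB}^{\gamma}/(\omega\,\mathcal{C}_{aAC}^{\gamma})$ and $y=l/\omega$, then swap $B$ and $C$ for part~(2). Your explicit naming of $x$, $y$, and $r=\beta/\gamma$ and your verification that $0\leq x\leq y\leq1$ makes the invocation of Lemma~\ref{L2} slightly more transparent than in the paper, but the argument is the same.
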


\begin{proof}
If $\mathcal{C}_{aAB}^\gamma\leq l\mathcal{C}_{aAC}^\gamma$,  we have
\begin{eqnarray}
\mathcal{C}_{aA|BC}^\beta&=&(\mathcal{C}_{aA|BC}^\gamma)^{\frac{\beta}{\gamma}}\geq(\mathcal{C}_{aAB}^\alpha+\omega \mathcal{C}_{aAC}^\gamma)^{\frac{\beta}{\gamma}}\nonumber\\
&=&\Big[\frac{\mathcal{C}_{aAB}^\gamma}{\omega \mathcal{C}_{aAC}^\gamma}+1\Big]^{\frac{\beta}{\gamma}}\omega ^{\frac{\beta}{\gamma}} \mathcal{C}_{aAC}^\beta
\nonumber\\
&\geq&\omega^{\frac{\beta}{\gamma}}\mathcal{C}_{aAC}^{\beta}\Big[\Big(1+\frac{l}{\omega}\Big)^{\frac{\beta}{\gamma}}
-\Big(p\frac{l}{\omega}\Big)^{\frac{\beta}{\gamma}}+\Big(p\frac{\mathcal{C}_{aAB}^\gamma}{\omega \mathcal{C}_{aAC}^\gamma}\Big)^{\frac{\beta}{\gamma}}
\Big]\nonumber\\
&=&p^\frac{\beta}{\gamma}\mathcal{C}_{aAB}^\beta+\big[(\omega+l)^{\frac{\beta}{\gamma}}-(pl)^{\frac{\beta}{\gamma}}\big]\mathcal{C}_{aAC}^\beta,
\end{eqnarray}
where the first inequality follows from (\ref{u}), and the second inequality from (\ref{aaaa1}) in Lemma \ref{L2}. Therefore, we get the inequality (\ref{xx1}). If $\mathcal{C}_{aAC}^\gamma\leq l\mathcal{C}_{aAB}^\gamma$, analogously we have $\mathcal{C}_{aA|BC}^\gamma\geq \mathcal{C}_{aAC}^\gamma+\omega \mathcal{C}_{aAB}^\gamma$, which gives rise to the inequality (\ref{xx2}).
\end{proof}

\noindent{\bf Remark 2} \, \
We have established a class of lower bounds for the $\beta$-th ($0\leq\beta\leq\frac{\gamma}{2}, ~\gamma\geq2$) power of CoA, utilizing the parameter $p$, where $1/2\leq p \leq1$. When $p=1$,  $\mathcal{C}_{aAB}^\gamma\leq l\mathcal{C}_{aAC}^\gamma$, the CoA satisfies
$$
\mathcal{C}_{aA|BC}^\beta\geq \mathcal{C}_{aAB}^\beta+[(\omega+l)^{\frac{\beta}{\gamma}}-l^{\frac{\beta}{\gamma}}]\mathcal{C}_{aAC}^\beta.
$$
Obviously our Theorem \ref{coa2} is a generalization of the inequality (\ref{XHLF2}) in Ref.~\cite{XHLF}. The lower bounds for the $\beta$-th power of the CoA become tighter as $p$ (or $l$) decreases, or $\omega$  increases, then
\begin{eqnarray}\label{bx}
\mathcal{C}_{aA|BC}^\beta&\geq& p^\frac{\beta}{\gamma}\mathcal{C}_{aAB}^\beta+\big[(\omega+l)^{\frac{\beta}{\gamma}}-(pl)^{\frac{\beta}{\gamma}}\big]\mathcal{C}_{aAC}^\beta\nonumber\\
&\geq&
\mathcal{C}_{aAB}^\beta+\big[(\omega+l)^{\frac{\beta}{\gamma}}-l^{\frac{\beta}{\gamma}}\big]\mathcal{C}_{aAC}^\beta\nonumber\\
&\geq& \mathcal{C}_{aAB}^\beta+\big((1+l)^\frac{\beta}{\gamma}
-l^\frac{\beta}{\gamma}\big)\mathcal{C}_{aAC}^\beta\\
&\geq& \mathcal{C}_{aAB}^\beta+(2^\frac{\beta}{\gamma}-1)\mathcal{C}_{aAC}^\beta\nonumber \label{XX3}
\end{eqnarray}
for all $0\leq\beta\leq\frac{\gamma}{2},~\gamma\geq2$, the second, third, and fourth equalities are satisfied when $p=1$, $\omega=1$ and $l=1$, respectively. Consequently, our results yield a tighter bound than that presented in equation (\ref{XHLF2}) of Ref.~\cite{XHLF}.
According to inequality (\ref{bx}), it follows that $(1+l)^\frac{\beta}{\gamma}
-l^\frac{\beta}{\gamma}=\frac{(1+k)^{\frac{\beta}{\gamma}}-1}{k^{\frac{\beta}{\gamma}}}$ if $l=\frac{1}{k}$ with $k\geq1$,
as well as $\frac{(1+k)^{\frac{\beta}{\gamma}}-1}{k^{\frac{\beta}{\gamma}}}\geq2^{\frac{\beta}{\gamma}}-1$ for $0\leq\beta\leq\gamma,~\gamma\geq2$. It is clear that the new established monogamy inequality for CoA is better than the inequality in Ref.~\cite{LZZ}. As a result, our inequality is notably tighter than the one detailed in Ref.~\cite{SX032344}.

\noindent{\bf Example 2} \, \
Let us consider the following three-qubit GW state,
\begin{equation}\label{Con66}
|\psi\rangle_{A|BC}=\frac{1}{\sqrt{6}}|100\rangle+\frac{1}{\sqrt{6}}|010\rangle+\frac{2}{\sqrt{6}}|001\rangle.
\end{equation}
We get $\mathcal{C}_{aA|BC}=\frac{\sqrt{5}}{3}$, $\mathcal{C}_{aAB}=\frac{1}{3}$ and
$\mathcal{C}_{aAC}=\frac{2}{3}$.
According to $l(\mathcal{C}_{aAC})^{3}\geq (\mathcal{C}_{aAB})^{3}$ and $\mathcal{C}_{aA|BC}^{3}\geq \mathcal{C}_{aAB}^{3}+\omega \mathcal{C}_{aAC}^{3}$, we have $\frac{1}{8}\leq l\leq1,~1\leq \omega\leq \frac{5\sqrt{5}-1}{8}$.  Therefore, when $N=3$ we have $\mathcal{C}_{aA|BC}^{\beta}=(\frac{\sqrt{5}}{3})^{\beta}$, $\mathcal{C}_{aAB}^\beta+(2^\frac{\beta}{\gamma}-1)\mathcal{C}_{aAC}^\beta=(\frac{1}{3})^\beta+(2^\frac{\beta}{\alpha}-1)(\frac{2}{3})^\beta$ from (\ref{SX}) in Ref.~\cite{SX032344},
$\mathcal{C}_{aAB}^\beta+(((1+k)^\frac{\beta}{\gamma}-1)/{k^\frac{\beta}{\gamma}})\mathcal{C}_{aAC}^\beta=(\frac{1}{3})^\beta+({(1+k)^\frac{\beta}{\alpha}-1})/{k^\frac{\beta}{\alpha}})(\frac{2}{3})^\beta$
 from (\ref{LYY}) in Ref.~\cite{LZZ},
$\mathcal{C}_{aAB}^\beta+[(\omega+l)^{\frac{\beta}{\gamma}}-l^{\frac{\beta}{\gamma}}]\mathcal{C}_{aAC}^\beta
=(\frac{1}{3})^\beta+[(\omega+l)^{\frac{\beta}{\gamma}}-l^{\frac{\beta}{\gamma}}](\frac{2}{3})^\beta$
 from (\ref{XHLF2}) in Ref.~\cite{XHLF} and $p^\frac{\beta}{\gamma}\mathcal{C}_{aAB}^\beta+\big[(\omega+l)^{\frac{\beta}{\gamma}}-(pl)^{\frac{\beta}{\gamma}}\big]\mathcal{C}_{aAC}^\beta
=p^\frac{\beta}{\gamma}(\frac{1}{3})^\beta+[(\omega+l)^{\frac{\beta}{\gamma}}-(pl)^{\frac{\beta}{\gamma}}](\frac{2}{3})^\beta$ from our result (\ref{xx1}).
Let $\omega=\frac{9}{8},~l=\frac{1}{k}=\frac{3}{4},~p=\frac{3}{4}~(p=\frac{1}{2})$.
It is evident that our result is better than the result (\ref{XHLF2}) in Ref.~\cite{XHLF},
thus also better than (\ref{LYY}) and (\ref{SX}) presented in Ref.~\cite{LZZ} and Ref.\cite{SX032344}, respectively, see Fig.\ref{Fig2}.
\begin{figure}
  \centering
  % Requires \usepackage{graphicx}
  \includegraphics[width=10cm]{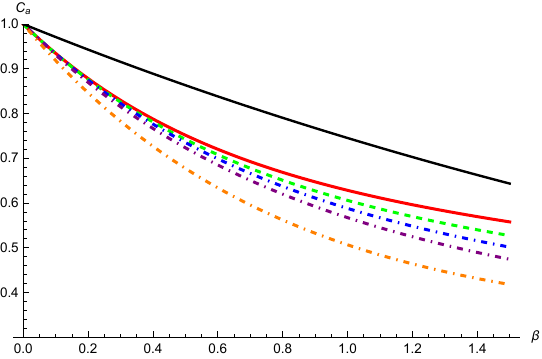}
  \caption{The $y$-axis represents the lower bound for the concurrence of assistance $\mathcal{C}^\beta_{aA|BC}$.
  Solid black line denotes $\mathcal{C}^\beta_{aA|BC}$ for the state presented in (\ref{Con66}). The red thick (green dashed) line represents the lower bound given in inequality (\ref{x1}) with $p=1/2$~($p=3/4$), and blue dot dashed thick represents the lower bound of (\ref{XHLF2}) from Ref.~\cite{XHLF}.
  The purple dot dashed and orange dot dashed line denotes the lower bound from Ref.~\cite{LZZ} and Ref.~\cite{SX032344}, respectively.}
  \label{Fig2}
\end{figure}

By repeatedly applying Theorem \ref{coa2}, we have extended the monogamy inequalities to the reduced density matrices of any $\mathrm{GW}$ state, thereby establishing the following  theorem.

\begin{theorem}\label{coa3}
Let $\rho_{A_{j_{1}}\cdots A_{j_{t}}}$ denote the reduced density  matrix of the GW state $|\mathcal{W}_{n}^{d}\rangle_{A_{1}\cdots A_{n}}$, and $P=\{A,B_{1},\cdots,B_{N-1}\}$ a partition of the set $\{A_{j_{1}}\cdots A_{j_{t}}\},~N\leq t\leq n$. Consider the real numbers $\omega_{r}$, $l_{r}$, $\delta_{r}$ and $p$, where $\omega_{r}\geq1$, $l_{r}\geq1$ , $\delta_{r}\geq1$, $1\le r\le N-2$ and $1/2\leq p\leq1$.
If $\mathcal{C}_{aAB_i}^\gamma\geq l_{i}^{\delta_{i}}\mathcal{C}_{aA|B_{i+1}\cdots B_{N-1}}^\gamma$, $\mathcal{C}_{aA|B_i\cdots B_{N-1}}^\gamma\geq \mathcal{C}_{aAB_i}^\gamma+\omega_i\mathcal{C}_{aA|B_{i+1}\cdots B_{N-1}}^\gamma$ for
$i=1,2,\cdots,z$, and if $\mathcal{C}_{aA|B_{j+1}\cdots B_{N-1}}^\gamma\geq l_{j}^{\delta_{j}}\mathcal{C}_{aAB_j}^\gamma$, $\mathcal{C}_{aA|B_j\cdots B_{N-1}}^\gamma\geq\omega_j\mathcal{C}_{aAB_j}^\gamma+\mathcal{C}_{aA|B_{j+1}\cdots B_{N-1}}^\gamma$ for $j=z+1,\cdots,N-2$, $1\leq z\leq N-3$ and $N\geq4$, we have
\begin{eqnarray}
\mathcal{C}_{aA|B_1\cdots B_{N-1}}^\beta&\geq&p^\frac{\beta}{\gamma}\mathcal{C}_{aAB_1}^\beta+
\sum\limits_{i=2}^{z}\prod\limits_{l=1}^{i-1}\Gamma_{l}p^\frac{(i-1)\beta}{\gamma}
\mathcal{C}_{aAB_{i}}^{\beta}\nonumber\\
& &+\Gamma_1\cdots\Gamma_{z}
\Big(\Gamma_{z+1}\mathcal{C}_{aAB_{z+1}}^{\beta}+\sum\limits_{j=z+2}^{N-2}
\Gamma_{j}p^\frac{(j-z-2)\beta}{\gamma}\mathcal{C}_{aAB_{j}}^{\beta}\Big)\nonumber\\
& &+\Gamma_1\cdots\Gamma_{z}p^\frac{(N-z-2)\beta}{\gamma}\mathcal{C}_{aAB_{N-1}}^\beta
\end{eqnarray}
for all $0\leq\beta\leq\frac{\gamma}{2},~\gamma\geq2$, where $\Gamma_{r}=(\omega_{r}+l_{r}^{\delta_{r}})^\frac{\beta}{\gamma}
-(l_{r}^{\delta_{r}})^\frac{\beta}{\gamma}$.
\end{theorem}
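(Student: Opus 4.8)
The plan is to prove Theorem~\ref{coa3} by iterating, along the chain $A|B_1B_2\cdots B_{N-1}$, a two-term estimate of the kind established in Theorem~\ref{coa2}, in exact analogy with the way Theorem~\ref{coaa2} is deduced from Theorem~\ref{coa1}. The chain is cut at the threshold index $z$: the block $B_1,\ldots,B_z$, on which $\mathcal{C}_{aAB_i}$ dominates the residual tail, is expanded by one family of one-step inequalities, the block $B_{z+1},\ldots,B_{N-1}$ by the complementary family, and the two expansions are patched together at the end.

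First I would record that any partition of $|\mathcal{W}^d_n\rangle$ is again a GW state in higher dimension, so that (\ref{u}) yields $\mathcal{C}_{aA|B_i\cdots B_{N-1}}^\gamma\geq \mathcal{C}_{aAB_i}^\gamma+\mathcal{C}_{aA|B_{i+1}\cdots B_{N-1}}^\gamma$ for every $i$ and every $\gamma\geq2$; in particular the hypothesis $\mathcal{C}_{aA|B_i\cdots B_{N-1}}^\gamma\geq \mathcal{C}_{aAB_i}^\gamma+\omega_i\mathcal{C}_{aA|B_{i+1}\cdots B_{N-1}}^\gamma$ with $\omega_i\geq1$ is exactly the input that Theorem~\ref{coa2} requires for the tripartition $\{A,\,B_i,\,B_{i+1}\cdots B_{N-1}\}$. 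Depending on which of $\mathcal{C}_{aAB_i}^\gamma$ and $l_i^{\delta_i}\mathcal{C}_{aA|B_{i+1}\cdots B_{N-1}}^\gamma$ is the larger, part~(1) or part~(2) of Theorem~\ref{coa2} provides a one-step lower bound $\mathcal{C}_{aA|B_i\cdots B_{N-1}}^\beta\geq c_1\,\mathcal{C}_{aAB_i}^\beta+c_2\,\mathcal{C}_{aA|B_{i+1}\cdots B_{N-1}}^\beta$, with $\{c_1,c_2\}=\{p^{\beta/\gamma},\,\Gamma_i\}$ in the order dictated by that case and $\Gamma_i=(\omega_i+l_i^{\delta_i})^{\beta/\gamma}-(l_i^{\delta_i})^{\beta/\gamma}$.

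Applying this one-step inequality successively for $i=1,\ldots,z$ expands $\mathcal{C}_{aA|B_1\cdots B_{N-1}}^\beta$ into a sum in which the coefficient of each $\mathcal{C}_{aAB_i}^\beta$ has accumulated the product $\Gamma_1\cdots\Gamma_{i-1}$ together with the appropriate power of $p^{\beta/\gamma}$, leaving the residual term $\Gamma_1\cdots\Gamma_z\,\mathcal{C}_{aA|B_{z+1}\cdots B_{N-1}}^\beta$. I then run the complementary chain of estimates on $\mathcal{C}_{aA|B_{z+1}\cdots B_{N-1}}^\beta$ for $j=z+1,\ldots,N-2$, peeling off the tail terms $\mathcal{C}_{aAB_j}^\beta$ with coefficients $\Gamma_j$ times the appropriate powers of $p^{\beta/\gamma}$ until only $\mathcal{C}_{aAB_{N-1}}^\beta$ remains; substituting the tail expansion into the head expansion and collecting terms then produces the asserted inequality. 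The degenerate cases in which some $\mathcal{C}_{aAB_i}=0$ — which by (\ref{u}) force the corresponding tail concurrence to vanish as well — are disposed of separately, as in the proof of Theorem~\ref{coa1}.

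The step I expect to be the main obstacle is the coefficient bookkeeping across the two regimes: one must verify that the powers of $p^{\beta/\gamma}$ and the products of the $\Gamma_r$'s assemble into exactly the pattern in the statement and, more delicately, that at every invocation of Theorem~\ref{coa2} the ratio fed to Lemma~\ref{L2} genuinely lies in $[0,1]$ and each $\Gamma_r\geq0$, so that no nonnegative term is illegitimately dropped. Establishing this requires using, at each index, the two hypotheses attached to it simultaneously (the comparison with $l_i^{\delta_i}$ and the $\omega_i$-bound) together with the constraints $\omega_r\geq1$, $l_r\geq1$, $\delta_r\geq1$, $0\leq\beta\leq\gamma/2$ and $1/2\leq p\leq1$; confirming that these constraints are mutually compatible at every link of the chain is where the real work lies.
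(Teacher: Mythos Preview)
Your proposal is essentially the paper's own proof: iterate Theorem~\ref{coa2} along the chain for $i=1,\ldots,z$ to obtain the head expansion, then separately for $j=z+1,\ldots,N-2$ to obtain the tail expansion, and substitute the latter into the former. The paper's write-up is terser and does not explicitly verify the side conditions you flag (nonnegativity of each $\Gamma_r$, the range of the ratio fed to Lemma~\ref{L2}), so your caution about the coefficient bookkeeping is well placed.
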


\begin{proof}
According to Theorem \ref{coa2}, if $\mathcal{C}_{aAB_i}^\gamma\geq l_{i}^{\delta_{i}}\mathcal{C}_{aA|B_{i+1}\cdots B_{N-1}}^\gamma$, $\mathcal{C}_{aA|B_i\cdots B_{N-1}}^\gamma\geq \mathcal{C}_{aAB_i}^\gamma+\omega_i\mathcal{C}_{aA|B_{i+1}\cdots B_{N-1}}^\gamma$,
$i=1,2,\cdots,z$, $\gamma\geq2$, we get
\begin{eqnarray}\label{x}
\mathcal{C}_{aA|B_1\cdots B_{N-1}}^\beta&\geq&p^\frac{\beta}{\gamma}\mathcal{C}_{aAB_1}^\beta+\Gamma_1\mathcal{C}_{aA|B_2\cdots B_{N-1}}^\beta\nonumber\\
&\geq&p^\frac{\beta}{\gamma}\mathcal{C}_{aAB_1}^\beta+\Gamma_1p^\frac{\beta}{\gamma}\mathcal{C}_{aAB_2}^\beta+\Gamma_1\Gamma_{2}\mathcal{C}_{aA|B_3\cdots B_{N-1}}^\beta\geq\cdots\nonumber\\
&\geq&p^\frac{\beta}{\gamma}\mathcal{C}_{aAB_1}^\beta+\Gamma_1p^\frac{\beta}{\gamma}\mathcal{C}_{aAB_2}^\beta+\cdots+\Gamma_1\cdots\Gamma_{z-1}p^\frac{\beta}{\gamma}\mathcal{C}_{aAB_z}^\beta\nonumber\\
& &+\Gamma_1\cdots\Gamma_{z}\mathcal{C}_{aA|B_{z+1}\cdots B_{N-1}}^\beta.
\end{eqnarray}
If $\mathcal{C}_{aA|B_{j+1}\cdots B_{N-1}}^\gamma\geq l_{j}^{\delta_{j}}\mathcal{C}_{aAB_j}^\gamma$, $\mathcal{C}_{aA|B_j\cdots B_{N-1}}^\gamma\geq\omega_j\mathcal{C}_{aAB_j}^\gamma+\mathcal{C}_{aA|B_{j+1}\cdots B_{N-1}}^\gamma$, $j=z+1,\cdots,N-2$, $\gamma\geq2$, we have
\begin{eqnarray}\label{y}
\mathcal{C}_{aA|B_{z+1}\cdots B_{N-1}}^\beta&\geq&\Gamma_{z+1}\mathcal{C}_{aAB_{z+1}}^\beta+p^\frac{\beta}{\gamma}\mathcal{C}_{aA|B_{z+2}\cdots B_{N-1}}^\beta\nonumber\\
&\geq&\Gamma_{z+1}\mathcal{C}_{aAB_{z+1}}^\beta+p^\frac{\beta}{\gamma}\Gamma_{z+2}\mathcal{C}_{aAB_{z+2}}^\beta+p^\frac{2\beta}{\gamma}\mathcal{C}_{aA|B_{z+3}\cdots B_{N-1}}^\beta\geq\cdots\nonumber\\
&\geq&\Gamma_{z+1}\mathcal{C}_{aAB_{z+1}}^\beta+p^\frac{\beta}{\gamma}\Gamma_{z+2}\mathcal{C}_{aAB_{z+2}}^\beta+\cdots\nonumber\\
& &+p^\frac{(N-z-3)\beta}{\gamma}\Gamma_{N-2}\mathcal{C}_{aAB_{N-2}}^\beta+p^\frac{(N-z-2)\beta}{\gamma}\mathcal{C}_{aAB_{N-1}}^\beta.
\end{eqnarray}
Through the conjunction of inequalities (\ref{x}) and (\ref{y}), we conclude the proof.
\end{proof}

\section{ Monogamy and polygamy of Tsallis-q entanglement of assistance in multi-qudit systems}\label{s4}
The Tsallis-$q$ entropy, which extends the concept of von Neumann entropy, holds significant importance within the realm of quantum information theory. It not only generalizes global quantum discord but also establishes a sufficient condition for the monogamy of an $N$-party quantum state \cite{PS}. For a pure state $|\psi\rangle_{AB}=\sum_i\sqrt{\lambda_i}|ii\rangle,$
the Tsallis-$q$ entanglement (TqEE) is defined as follows
\begin{equation}\label{T1}
\mathcal{T}_q(|\psi\rangle_{AB})=\frac{1-\mathrm{Tr}\rho_A^q}{q-1}=\frac{1-\sum_i \lambda_i^q}{q-1}
\end{equation}
for any $q>0$ and $q\ne 1$. The T$q$EE for a mixed state $\rho_{AB}$ is defined as presented in Ref.~\cite{KIM062328}.
\begin{equation*}
\mathcal{T}_q(\rho_{AB})=\min_{\{p_i,|\psi_{i}\rangle\}}\sum_i p_i\mathcal{T}_q(|\psi_{i}\rangle_{AB}),
\end{equation*}
where the minimum takes over all pure state decompositions of $\rho_{AB}$. As a dual concept of T$q$EE, the Tsallis-$q$ entanglement of assistance (T$q$EEoA) is defined as \cite{KIM062328}
\begin{equation*}
\mathcal{T}_q^a(\rho_{AB})=\max_{\{p_i,|\psi_{i}\rangle\}}\sum_i p_i\mathcal{T}_q(|\psi_{i}\rangle_{AB}),
\end{equation*}
where the maximum takes over all pure state decompositions of $\rho_{AB}.$

Combining the equalities $(\ref{C1})$ and $(\ref{T1})$, we get for $|\psi\rangle_{AB}=\sqrt{\lambda_0}|00\rangle+\sqrt{\lambda_1}|11\rangle,$ it follows that $\mathcal{C}^2(|\psi\rangle_{AB})=4\lambda_0\lambda_1$ and $\mathcal{T}_q(|\psi\rangle_{AB})=\frac{1-\lambda_0^q-\lambda_1^q}{q-1}$. Thus, $\mathcal{T}_q(|\psi\rangle_{AB})=f_q(\mathcal{C}^2(|\psi\rangle_{AB}))$, where
\begin{equation*}
 f_q(\theta)=\frac{1}{q-1}[1-\big(\frac{1+\sqrt{1-\theta}}{2})^q-(\frac{1-\sqrt{1-\theta}}{2}\big)^q].
\end{equation*}
The TqEE can be considered as a generalization of the entanglement of formation $E_{f}$ as $q$ tends to 1, which is defined by \cite{BBPS,BDSW},
$E_{f}(\rho_{AB})=\mathrm{min}\sum_{i}p_{i}E_{f}(|\psi_{i}\rangle_{AB})$,
where $E_{f}(|\psi_{i}\rangle_{AB})=-\mathrm{Tr}\rho_{A}^{i}\ln\rho_{A}^{i}=-\mathrm{Tr}\rho_{B}^{i}\ln\rho_{B}^{i}$
is the von Neumann entropy of $|\psi_{i}\rangle_{AB}$, where the minimum takes over all pure state decompositions of $\rho_{AB}$.

Recently, two new relations concerning multi-partite quantum entanglement have been introduced in T$q$EE~\cite{SX032344}. Let $\rho_{A_{j_{1}}\cdots A_{j_{t}}}$ denote the reduced density  matrix of the GW state $|\mathcal{W}_{n}^{d}\rangle_{A_{1}\cdots A_{n}}$, and $\{Q_1,Q_2,\cdots,Q_s\}$ a partition of the set $\{A_{j_1},A_{j_2},\cdots,A_{j_t}\}$, $s\leq t<n$. When $q\in[\frac{5-\sqrt{13}}{2},\frac{5+\sqrt{13}}{2}],$ for $\alpha\ge 2$ one has
\begin{equation}\label{SX1}
	\mathcal{T}_q^{\alpha}(\rho_{Q_1|Q_2\cdots Q_s})\ge \sum_{i=2}^{s} \mathcal{T}_q^{\alpha}(\rho_{Q_1Q_i}).
\end{equation}
When $q\in[\frac{5-\sqrt{13}}{2},2]\cup[3,\frac{5+\sqrt{13}}{2}]$, for $0\leq\beta\le 1$ one has
\begin{equation}\label{SX2}
	\mathcal{T}_q^{\beta}(\rho_{Q_1|Q_2\cdots Q_s})\le \sum_{i=2}^{s} \mathcal{T}_q^{\beta}(\rho_{Q_1Q_i}).
\end{equation}
Analytical formulae of T$q$EE and T$q$EEoA of a
reduced density matrix for a GW state has been also obtained ~\cite{SX032344},
\begin{equation}\label{ddd}
\mathcal{T}_q^a(\rho_{A_{j_1}|A_{j_2}\cdots A_{j_t}})=\mathcal{T}_q(\rho_{A_{j_1}|A_{j_2}\cdots A_{j_t}})=f_q(\mathcal{C}^2(\rho_{A_{j_1}|A_{j_2}\cdots A_{j_t}})),
\end{equation}
where $q\in[\frac{5-\sqrt{13}}{2},2]\cup[3,\frac{5+\sqrt{13}}{2}].$

Concerning the monogamy and polygamy relation about T$q$EEoA for the GW states, we have the following theorems.

\begin{theorem}
Let $\rho_{A_{j_{1}}\cdots A_{j_{t}}}$ denote the reduced density  matrix of the GW state $|\mathcal{W}_{n}^{d}\rangle_{A_{1}\cdots A_{n}}$, and $\{Q_1,Q_2,\cdots,Q_s\}$ a partition of the set $\{A_{j_1},A_{j_2},\cdots,A_{j_t}\}$, $s\leq t<n$. For $q\in[\frac{5-\sqrt{13}}{2},2]\cup[3,\frac{5+\sqrt{13}}{2}]$, we see that
\begin{eqnarray}\label{TQ2}
(\mathcal{T}_q^a)^\alpha(\rho_{Q_1|Q_2\cdots Q_s})\geq\sum_{i=2}^{s} (\mathcal{T}_q^a)^\alpha(\rho_{Q_1Q_i})
\end{eqnarray}
for $\alpha\geq2$.
\end{theorem}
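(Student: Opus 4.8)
The plan is to deduce this from the already-recorded monogamy relation (\ref{SX1}) for the ordinary Tsallis-$q$ entanglement, using the analytical identity (\ref{ddd}) to bridge between the assisted and the unassisted quantities. The point is that (\ref{ddd}) makes the Tsallis-$q$ entanglement of assistance and the Tsallis-$q$ entanglement coincide on every reduced density matrix of a GW state for $q\in[\frac{5-\sqrt{13}}{2},2]\cup[3,\frac{5+\sqrt{13}}{2}]$, so the assisted monogamy inequality (\ref{TQ2}) is nothing but (\ref{SX1}) rewritten.

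In detail, I would first note that any partition of a GW state is again a GW state in a higher-dimensional system, so that---viewing each block $Q_1,\dots,Q_s$ as a single party---the marginals $\rho_{Q_1|Q_2\cdots Q_s}$ and $\rho_{Q_1Q_i}$ ($2\le i\le s$) are themselves reduced density matrices of a GW state. Hence (\ref{ddd}) applies to each, giving $\mathcal{T}_q^a(\rho_{Q_1|Q_2\cdots Q_s})=\mathcal{T}_q(\rho_{Q_1|Q_2\cdots Q_s})$ and $\mathcal{T}_q^a(\rho_{Q_1Q_i})=\mathcal{T}_q(\rho_{Q_1Q_i})$. Raising to the power $\alpha\ge2$ and invoking (\ref{SX1})---whose range $[\frac{5-\sqrt{13}}{2},\frac{5+\sqrt{13}}{2}]$ contains the interval assumed in the theorem---then yields
\[
(\mathcal{T}_q^a)^\alpha(\rho_{Q_1|Q_2\cdots Q_s})=\mathcal{T}_q^\alpha(\rho_{Q_1|Q_2\cdots Q_s})\ge\sum_{i=2}^s\mathcal{T}_q^\alpha(\rho_{Q_1Q_i})=\sum_{i=2}^s(\mathcal{T}_q^a)^\alpha(\rho_{Q_1Q_i}),
\]
which is exactly (\ref{TQ2}).

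Since both ingredients are already available, I do not anticipate a genuine obstacle; the only things that need checking are that (\ref{ddd}) is legitimately used at the level of the coarse-grained parties $Q_1,\dots,Q_s$ rather than single qudits, and that the $q$-interval of the theorem is really nested inside that of (\ref{SX1}) so that the two results are simultaneously in force. If one prefers a self-contained argument, the alternative is to apply (\ref{XD1}) to write $\mathcal{C}^2(\rho_{Q_1|Q_2\cdots Q_s})=\sum_{i\ge2}\mathcal{C}_a^2(\rho_{Q_1Q_i})$, re-express both sides through $f_q$ via (\ref{ddd}) and (\ref{XD2}), and then prove the scalar inequality $[f_q(\sum_i x_i)]^\alpha\ge\sum_i[f_q(x_i)]^\alpha$ for $x_i\ge0$, $\sum_i x_i\le1$, $\alpha\ge2$; this reduces by induction to the two-term case, and the required monotonicity/convexity behaviour of $f_q$ on the admissible $q$-range---which is precisely what underlies (\ref{SX1})---would be the only nontrivial input.
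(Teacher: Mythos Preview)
Your proposal is correct and matches the paper's own proof essentially line for line: the paper also invokes (\ref{ddd}) to replace $\mathcal{T}_q^a$ by $\mathcal{T}_q$ on both the global and the pairwise marginals, then applies (\ref{SX1}) on the nested $q$-range. Your additional remarks about why (\ref{ddd}) is legitimate at the coarse-grained level and about the interval containment are sound and slightly more explicit than the paper, but the argument is the same.
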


\begin{proof}
According to the relation (\ref{ddd}), when $\alpha\geq2$, $q\in[\frac{5-\sqrt{13}}{2},2]\cup[3,\frac{5+\sqrt{13}}{2}],$ we have
\begin{eqnarray}\nonumber
(\mathcal{T}_q^a)^\alpha(\rho_{Q_1|Q_2\cdots Q_s})
&=&(\mathcal{T}_q)^\alpha(\rho_{Q_1|Q_2\cdots Q_s})\\[1mm]\nonumber
&\geq&\sum_{i=2}^{s} (\mathcal{T}_q)^\alpha(\rho_{Q_1Q_i})\\[1mm]\nonumber
&=& \sum_{i=2}^{s} (\mathcal{T}_q^a)^\alpha(\rho_{Q_1Q_i}),
\end{eqnarray}
where the inequality follows from (\ref{SX1}).
\end{proof}

\noindent{\bf Example 3} \, \ Consider the three-qubit GW state given in (\ref{Con66}) again. By using equality (\ref{ddd}), we get the T$q$EEoA of $|\psi\rangle_{ABC}$. The residual quantity of T$q$EEoA is given by $(\mathcal{T}_q^a)^\alpha(\rho_{A|BC})-(\mathcal{T}_q^a)^\alpha(\rho_{AB})-(\mathcal{T}_q^a)^\alpha(\rho_{AC})$. We plot the residual quantity as a function of $q$ and $\alpha$. Fig.\ref{Fig3} and Fig.\ref{Fig4} show that the residual quantity is always positive for $q\in[\frac{5-\sqrt{13}}{2},2]\cup[3,\frac{5+\sqrt{13}}{2}]$ and $\alpha\geq2$.
 \begin{figure}[htbp]
\centering
\begin{minipage}[t]{0.48\textwidth}
\centering
\includegraphics[width=7cm]{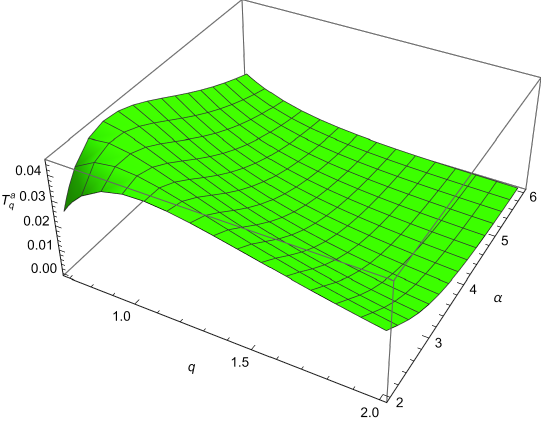}
\caption{\scriptsize The green  surface represents the residual quantity with $q\in[\frac{5-\sqrt{13}}{2},2]$.}
\label{Fig3}
\end{minipage}
\begin{minipage}[t]{0.48\textwidth}
\centering
\includegraphics[width=7cm]{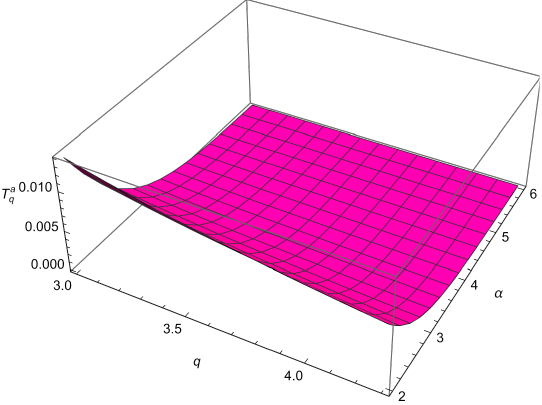}
\caption{\scriptsize The red surface represents the residual quantity with $q\in[3,\frac{5+\sqrt{13}}{2}]$.}
\label{Fig4}
\end{minipage}
\end{figure}

Similarly, we derive the following new polygamy relation about T$q$EEoA
for the GW states when $0\leq\beta\leq1$.

\begin{theorem}
Let $\rho_{A_{j_1}A_{j_2}\cdots A_{j_t}}$ denote the reduced density matrix of a GW state $|\mathcal{W}_{n}^{d}\rangle_{A_{1}\cdots A_{n}}$ and $\{Q_1,Q_2,\cdots,Q_s\}$ a partition of the set $\{A_{j_1},A_{j_2},\cdots,A_{j_t}\}$, $s\leq t<n$. For $q\in[\frac{5-\sqrt{13}}{2},2]\cup[3,\frac{5+\sqrt{13}}{2}]$, we see that
\begin{eqnarray}\label{TQ3}
(\mathcal{T}_q^a)^\beta(\rho_{Q_1|Q_2\cdots Q_s})\leq\sum_{i=2}^{s} (\mathcal{T}_q^a)^\beta(\rho_{Q_1Q_i}),
\end{eqnarray}
where $0\leq\beta\leq1$.
\end{theorem}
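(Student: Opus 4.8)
The plan is to mirror the proof of the preceding theorem, swapping the monogamy input (\ref{SX1}) for its polygamy counterpart (\ref{SX2}). The starting point is the identity (\ref{ddd}), which states that for a reduced density matrix of a GW state the T$q$EEoA coincides with the T$q$EE and equals $f_q$ of the squared concurrence, provided $q\in[\frac{5-\sqrt{13}}{2},2]\cup[3,\frac{5+\sqrt{13}}{2}]$. Since any partition of (a subset of) the parties of a GW state is again a GW state within a higher-dimensional space, as recalled after Eq.(\ref{XD2}), this identity applies both to the global cut $\rho_{Q_1|Q_2\cdots Q_s}$ and to each two-block marginal $\rho_{Q_1Q_i}$. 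Hence $(\mathcal{T}_q^a)^\beta(\rho_{Q_1|Q_2\cdots Q_s})=(\mathcal{T}_q)^\beta(\rho_{Q_1|Q_2\cdots Q_s})$ and $(\mathcal{T}_q^a)^\beta(\rho_{Q_1Q_i})=(\mathcal{T}_q)^\beta(\rho_{Q_1Q_i})$ for every $i$.

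Next, for $q$ in the above range and $0\le\beta\le1$, inequality (\ref{SX2}) gives $\mathcal{T}_q^{\beta}(\rho_{Q_1|Q_2\cdots Q_s})\le\sum_{i=2}^{s}\mathcal{T}_q^{\beta}(\rho_{Q_1Q_i})$. Chaining this with the two equalities above yields
\begin{equation*}
(\mathcal{T}_q^a)^\beta(\rho_{Q_1|Q_2\cdots Q_s})=\mathcal{T}_q^{\beta}(\rho_{Q_1|Q_2\cdots Q_s})\le\sum_{i=2}^{s}\mathcal{T}_q^{\beta}(\rho_{Q_1Q_i})=\sum_{i=2}^{s}(\mathcal{T}_q^a)^\beta(\rho_{Q_1Q_i}),
\end{equation*}
which is exactly (\ref{TQ3}).

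The computation itself is short; the point that needs care is checking that the hypotheses of (\ref{ddd}) and (\ref{SX2}) hold simultaneously — both require $q\in[\frac{5-\sqrt{13}}{2},2]\cup[3,\frac{5+\sqrt{13}}{2}]$, and (\ref{SX2}) additionally restricts $\beta$ to $[0,1]$, precisely the hypotheses assumed in the theorem. The one structural subtlety, which I expect to be the main (though mild) obstacle, is that (\ref{ddd}) must be invoked not only for the global bipartition but for each bipartite marginal $\rho_{Q_1Q_i}$; this is legitimate because every such marginal is itself a reduced density matrix of the GW state $|\mathcal{W}_n^d\rangle$, and hence, by the structural results of Ref.~\cite{KS495301}, again a GW state to which the analytical formula (\ref{ddd}) applies.
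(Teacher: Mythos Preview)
Your proposal is correct and follows precisely the approach the paper intends: the paper does not spell out a proof for this theorem but introduces it with ``Similarly, we derive the following new polygamy relation,'' signaling that one should repeat the argument of the preceding theorem with (\ref{SX2}) in place of (\ref{SX1}) and use (\ref{ddd}) on both sides. Your additional remark that (\ref{ddd}) must be applied to each marginal $\rho_{Q_1Q_i}$, and that this is licit because such marginals are themselves reduced density matrices of the GW state, is a welcome clarification not made explicit in the paper.
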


\section{conclusion and discussion}\label{s5}
The concepts of monogamy and polygamy relations highlight the intrinsic characteristics of multi-qudit entanglement, showcasing the complex nature of quantum entanglement. Initially, we elucidated the expressions of multi-qudit monogamy constraints through the CoA, confirming that these new inequalities delineate more detailed entanglement distributions compared to the previous ones. Furthermore, using T$q$EEoA with $\frac{5-\sqrt{13}}{2}\leq q\leq 2$ and $3\leq q\leq\frac{5+\sqrt{13}}{2}$, we have formulated new monogamy and polygamy relations for multi-qudit GW states. Remarkably, our inequality (\ref{TQ3}) is particularly significant as it is not only valid for the discussed systems but also for specific high-dimensional quantum systems that defy the CKW monogamy inequality (\ref{CKW}).

Consider a higher-dimensional example to demonstrate the polygamy inequality (\ref{TQ3}). Kim et al. demonstrate that the $\mathrm{CKW}$ inequality does not apply to the pure $3\otimes2\otimes2$ state $|\psi_{ABC}\rangle$,
$|\psi_{ABC}\rangle=\frac{1}{6}(\sqrt2|121\rangle+\sqrt2|212\rangle+|311\rangle+|322\rangle)$ \cite{KIM012329}. The T$q$EEoA is given by $T_{q}^{a}(|\psi_{A|BC}\rangle)=\frac{1}{q-1}[1-(\frac{1}{3})^{q-1}]$. The bipartite reduced state of subsystem $AB$ is expressed as
$\rho_{AB}=\frac{1}{2}(|\mu\rangle_{AB}\langle \mu|+|\nu\rangle_{AB}\langle \nu|)$,
with
$|\mu\rangle_{AB}=\frac{\sqrt2}{\sqrt3}|12\rangle+\frac{1}{\sqrt3}|31\rangle$ and
$|\nu\rangle_{AB}=\frac{\sqrt2}{\sqrt3}|21\rangle+\frac{1}{\sqrt3}|32\rangle$.
It can be demonstrated that for any arbitrary pure states $|\phi_{AB}\rangle=c_\mu|\mu\rangle_{AB}+c_\nu|\nu\rangle_{AB}$ with $|c_\mu|^2+|c_\nu|^2=1$, their reduced state $\rho_A=\mathrm{Tr}_B(|\phi\rangle_{AB}\langle\phi|)$ has the same spectrum $\{0,1/3,2/3\}$. Then, the T$q$EE of $|\phi_{AB}\rangle$ is $T_{q}(|\phi_{AB}\rangle)=\frac{1}{q-1}[1-(1+2^q)(\frac{1}{3})^{q}].$ Thus, the T$q$EE of $\rho_{AB}$ is $T_{q}(\rho_{AB})=\frac{1}{q-1}[1-(1+2^q)(\frac{1}{3})^{q}].$ In the same way, the T$q$EE of $\rho_{AC}$ is $T_{q}(\rho_{AC})=\frac{1}{q-1}[1-(1+2^q)(\frac{1}{3})^{q}].$  From the relation  (\ref{TQ3}) of T$q$EEoA, we get
\begin{eqnarray}
T_q^a(|\psi_{A|BC}\rangle)-T_q^a(\rho_{AB})-T_q^a(\rho_{AC})
&\leq& T_q^a(|\psi_{A|BC}\rangle)-T_q(\rho_{AB})-T_q(\rho_{AC})\nonumber\\
&=&\frac{1}{q-1}\Big\{[1-(\frac{1}{3})^{q-1}]-2[1-(1+2^q)(\frac{1}{3})^{q}]\Big\}.
\end{eqnarray}
As shown in Fig.\ref{Fig5}, the polygamy inequality (\ref{TQ3}) still works in this multipartite higher-dimensional system. In other words, the relation  (\ref{TQ3})  based on T$q$EEoA remains valid for high-dimensional quantum systems, which defy the CKW monogamy inequality(\ref{CKW}). Our findings could provide valuable insights for deeper investigations into the distribution of entanglement within higher-dimensional systems.

\begin{figure}\label{TQ}
  \centering
  % Requires \usepackage{graphicx}
  \includegraphics[width=10cm]{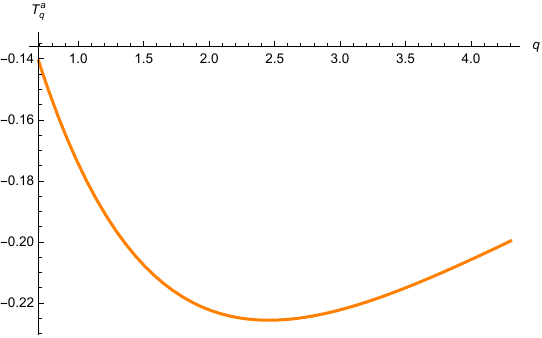}
  \caption{Inequality (\ref{TQ3}) holds for $q\in[\frac{5-\sqrt{13}}{2},2]\cup[3,\frac{5+\sqrt{13}}{2}]$ (solid orange line).
  }
\label{Fig5}
\end{figure}

\bigskip

\noindent{\bf Acknowledgments}
This work is supported in part by the National Natural Science Foundation of China under Grant Nos. 12075159 and 12171044, the specific research fund of the Innovation Platform for Academicians of Hainan Province.

\bigskip
\noindent{\bf Data availability statement}\
No new data were created or analyzed in this study.

\end{document}